\theoremstyle{thmstyleone}%
\newtheorem{theorem}{Theorem}
\newtheorem{proposition}[theorem]{Proposition}%
\newtheorem{corollary}{Corollary}
\theoremstyle{thmstyletwo}%
\newtheorem{example}{Example}%
\newtheorem{remark}{Remark}%
\theoremstyle{thmstylethree}%
\newtheorem{definition}{Definition}%
\newcommand{\FF}{\mathds{F}}
\newcommand{\ZZ}{\mathds{Z}}
\newcommand{\KK}{ \mathds{K} }
\newcommand{\kk}{\mathds{k}}
\newcommand{\Mat}{\mathds{M}}
\newcommand{\calA}{\mathcal{A}}
\newcommand{\calI}{\mathcal{I}}
\newcommand{\ord}{\operatorname{ord}}
\newcommand{\diag}{\operatorname{diag}}
\newcommand{\Aut}{\operatorname{Aut}}
\newcommand{\intrange}[2]{\llbracket #1,#2 \rrbracket}
\newcommand{\End}{\operatorname{End}}
\newcommand{\supp}{\operatorname{supp}}
\newcommand{\Supp}{\operatorname{Supp}}
\newcommand{\wt}[1]{\operatorname{wt}\left( #1 \right)}
\newcommand{\dist}{\operatorname{dist}}
\newcommand{\boldB}{\mathbf{B}}
\newcommand{\Ext}{\operatorname{Ext}}
\newcommand{\Mod}[1]{\ (\operatorname{mod}\ #1)}
\begin{document}
\title[On Codes from Split Metacyclic Groups]{On Codes from Split Metacyclic Groups}


\author[]{\fnm{Kirill} \sur{Vedenev}\;\orcidlink{0000-0002-7893-655X}\footnote{E-mail: \texttt{vedenevk@gmail.com}}}



\abstract{The paper presents a comprehensive study of group codes from non-abelian split metacyclic group algebras. We derive an explicit Wedderburn-like decomposition of finite split metacyclic group algebras over fields with characteristic coprime to the group order. Utilizing this decomposition, we develop a systematic theory of metacyclic codes, providing their algebraic description and proving that they can be viewed as generalized concatenated codes with cyclic inner codes and skew quasi-cyclic outer codes. We establish bounds on the minimum distance of metacyclic codes and investigate the class of induced codes. Furthermore, we show the feasibility of constructing a partial key-recovery attack against certain McEliece-type cryptosystems based on metacyclic codes by exploiting their generalized concatenated structure.}

\keywords{Group codes, Metacyclic groups, Wedderburn decomposition,  Generalized concatenated codes, Induced codes, Code-based cryptography}



\maketitle

\section{Introduction}
Let $G$ be a finite group and $R$ be a ring with unity. The set $RG$ of all formal linear combinations of the form
$\sum_{g \in G} \lambda_g g$, where $\lambda_g \in R$,
equipped with the following operations of addition, multiplication, and multiplication by a scalar:
    \[ 
        \left( \sum_{g \in G} \lambda_g g \right) +
        \left( \sum_{g \in G} \mu_g g \right) =
        \sum_{g \in g} (\lambda_g + \mu_g) g,
    \]
    \[ 
        \left( \sum_{g \in G} \lambda_g g \right) 
        \left( \sum_{g \in G} \mu_g g \right) =
        \sum_{g \in g} \Bigg( \sum_{\substack{h, h' \in G \\ hh' = g}} \lambda_h \mu_{h'} \Bigg) g = \sum_{g \in G} \left( \sum_{h \in G} \lambda_h \mu_{h^{-1}g}  \right) g,
    \]
    \[  \mu \left( \sum_{g \in G} \lambda_g g \right) =  \sum_{g \in G} ( \mu \lambda_g ) g, \quad  \left( \sum_{g \in G} \lambda_g g \right) \mu = \sum_{g \in G} (\lambda_g \mu) g, \]
is called the \emph{group ring} of $G$ over $R$. If $R$ is commutative, then $RG$ is also called the \emph{group algebra} of $G$ over $R$ \cite[\S 3.2]{Miles2002}. 
There is a natural embedding of $G$ into $RG$ given by $x \mapsto \sum_{g \in G} \lambda_g g$, where $\lambda_x = 1$ and $\lambda_g = 0$ for all $g \neq x$. Similarly, there is a natural embedding of $R$ into $RG$ via $r \mapsto re$. So, both $G$ and $R$ can be regarded as subsets of $RG$.

Given an element $u = \sum_{g \in G} u_g g \in RG$, the \emph{support} of $u$ is defined by
\[ \supp(u) = \left\{ g \in G \mid u_g \neq 0 \right\}, \]
and the \emph{Hamming weight} of $u$ is defined by $\wt{u} = |\supp(u)|$. \emph{The Hamming distance} on $RG$ is given by $\dist_H(u,v) = \wt{u - v}$ for $u, v \in RG$.

For a finite field $\FF_q$ of cardinality $q$ and a finite group $G$, any left (respectively, right) ideal $C$ of the group ring $\FF_qG$ endowed with the Hamming distance $\dist_H$ is called a \emph{left (resp. right) group code} or a \emph{left (resp. right) $G$-code} \cite{Willems2021}. 
Note that the anti-automorphism of $\FF_qG$ given by
\begin{equation} \label{d:eq:anti}
    u = \sum_{g \in G} u_g g \mapsto u^* = \sum_{g \in G} u_{g^{-1}} g 
\end{equation} 
(see \cite[Proposition 3.2.11]{Miles2002}) establishes a one-to-one correspondence between left and right $G$-codes. In other words, if $C$ is a left (resp. right) $G$-code, then $C^* = \{ c^* \mid c \in C \}$ is a right (resp. left) $G$-code. Therefore, we will mainly focus on left $G$-codes, which will be simply referred to as $G$-codes unless otherwise specified.

The group codes, introduced independently by S.~Berman in \cite{Berman1967} and F. MacWilliams in \cite{MacWilliams1970}, form a powerful class of linear codes that possess many desirable properties, including efficient encoding and decoding, as well as the applicability of algebraic methods for studying them. Abelian codes, i.e., codes from abelian group algebras, have been studied in some depth, while there are not many systematic results known about non-abelian codes. Non-abelian codes are particularly interesting due to their possible applications in code-based cryptography, since their complex algebraic structure was conjectured to improve the security of code-based cryptosystems and reduce public key sizes \cite{Deundyak2015,Deundyak2016}.

Being the simplest class of non-abelian groups, split metacyclic groups $G_{n,m,r}$, which are defined by the following presentation
\[
    G_{n,m,r} = \left< a, b \mid a^n = b^m = e, ba = a^r b \right>, 
\]
where $r^m \equiv 1 \Mod{n}$, are natural choice for studying non-abelian codes. As demonstrated in \cite{Vedenev2018, Vedenev2019, Vedenev2020, Vedenev2020-2} in the case of dihedral groups, the Wedderburn decomposition of a group algebra into a direct sum of matrix algebras turns out to be a very powerful and convenient tool for studying group codes. However, the problem of explicitly constructing such decompositions is very non-trivial.

\emph{Contribution.} The contribution of this paper is twofold. First, an explicit Wedderburn-like decomposition of finite split metacyclic group algebras is obtained. Second, a systematic theory of split metacyclic codes is developed by leveraging this decomposition. Specifically, it is proved that metacyclic codes can be viewed as generalized concatenated codes, with inner codes being cyclic codes and outer codes being skew quasi-cyclic codes. In addition, the class of induced codes is studied, and estimates of the main parameters of metacyclic codes are obtained. Finally, the possibility of building a partial key-recovery attack against certain metacyclic code-based McEliece-type cryptosystems is demonstrated.

\emph{Organization.} Section \ref{sec:preliminaries} provides the necessary preliminaries. In Section \ref{sec:structure}, a decomposition for finite split metacyclic group algebras in the case where $\gcd(q,n)=1$ is obtained. In Section \ref{sec:codes}, the algebraic description of metacyclic codes is given, and their concatenated structure is studied. Additionally, these results are used to derive a lower bound on the minimum distance of metacyclic codes and to build partial key-recovery attacks against cryptosystems based on some metacyclic codes. Section \ref{sec:induced} provides an algebraic description of induced codes and derives a lower bound on the minimum distance of metacyclic codes by leveraging induced codes. Finally, Section \ref{sec:concl} concludes the paper.

\emph{Prior and Related Works.}
In 1994, R. Sabin \cite{Sabin1994} showed that some quasi-cyclic codes can be viewed as ideals of metacyclic group algebras and discovered that several such codes have minimum distances equal to those of the best-known linear codes. In 1995, R. Sabin and S. Lomonaco \cite{Sabin1995} proved that central codes, i.e., two-sided ideals, from semisimple metacyclic group algebras are combinatorially equivalent to abelian codes, and provided several examples of good non-central codes obtained by leveraging group representations. Additionally, they described an algorithm for finding irreducible representations in the case when the ambient field $\FF_q$ contains all $n$-th roots of unity.

In 2016, S. Assuena and C.P. Miles \cite{Assuena2016} considered semisimple non-abelian metacyclic group algebras and described their primitive central idempotents in the case when the order of $G$ equals $p^m l^n$, where $p$ and $l$ are different prime numbers. In their recent works \cite{assuena2019, Assuena2020}, S. Assuena and C.P. Miles proposed constructions of some non-central codes from metacyclic group algebras by leveraging idempotents derived from subgroups, with parameters of some of those codes matching the best known linear codes.

In 2007, O. Broche and A. del Rio \cite{Broche2007} proposed a computational method for describing the Wedderburn decomposition and the primitive central idempotents of a semisimple finite group algebra of an abelian-by-supersolvable group $G$ from certain pairs of subgroups of $G$. Building upon this work, in 2014, G. Olteanu and V. Gelder \cite{Olteanu2014} proposed algorithms to construct minimal left group codes and showed that their main result can be applied to metacyclic groups of the form $G_{q^m, p^n, r} = C_{q^{m}} \leftthreetimes C_{p^n}$ with $C_{p^n}$ acting faithfully on $C_{q^m}$, where $p$ and $q$ are different primes and the field size $s$ is coprime to $p$ and $q$.

In 2015, F.~E.~Brochero Martinez \cite{Martinez2015} obtained an explicit Wedderburn decomposition of the semisimple dihedral group algebras $\FF_qD_{2n}$, where $D_{2n} = G_{n, 2, -1}$. In \cite{Vedenev2019,Vedenev2020, Vedenev2020-2,Vedenev2021}, the systematic theory of dihedral codes was developed in terms of this decomposition.

In 2020, Gao et al. \cite{Gao2020} generalized the results of \cite{Martinez2015} by obtaining an explicit Wedderburn decomposition for $\FF_qG_{n,2,r}$, where $G_{n,2,r}$ is defined as above and $r^2 \equiv 1 \pmod{n}$. In addition, Gao et al. \cite{Gao2020} described some linear complementary dual (LCD) codes from these group algebras.

In 2016, Cao et al. \cite{Cao2016} studied the concatenated structure of dihedral codes leveraging only finite field theory and basic theory of cyclic codes and skew cyclic codes. Using similar methods, Cao et al. \cite{Cao2016-2} proved the concatenated structure of codes from a class of metacyclic groups of the form $G_{n,3,r}$. In 2022, Cao et al. \cite{Cao2022} refined the results of \cite{Cao2016} and determined all distinct Euclidean LCD codes and Euclidean self-orthogonal dihedral codes in terms of their concatenated structure.

In 2021, M.~Borello and A.~Jamous \cite{Borello2021} derived a BCH-like lower bound on the minimum distance of dihedral codes by viewing dihedral codes as subcodes of expanded cyclic codes over field extensions. Note that a similar technique was leveraged by K.~Lally in \cite{Lally2003} for deriving the minimum distance bound for quasi-cyclic codes.

Note that the prior works considered metacyclic codes with serious restrictions on the parameters $n, m, r, q$ and/or focused on developing certain good examples of such codes.

\section{Preliminaries}
\label{sec:preliminaries}
\paragraph{Notation} 
We denote the finite field of size $q$ as $\FF_q$. The ring of polynomials over $\FF_q$ is denoted by $\FF_q[x]$, with $\FF_q[x]_n$ denoting the set of polynomials of degree $n$ and $\FF_q[x]_{< n}$ denoting the set of polynomials of degree less than $n$. Given an irreducible polynomial $g(x)$ over $\FF_q[x]$ with a root $\gamma$ in the splitting field of $\FF_q$, we denote the extension of $\FF_q$ with $\gamma$ by $\FF_q[\gamma] \simeq \FF_q[x]/(g(x))$. The notation $\intrange{a}{b}$, where $a, b \in \ZZ$, stands for the set $\{ i \in \ZZ \mid a \leq i \leq b \}$. We denote the identity map on a set $S$ by $\mathrm{id}_S$, and the identity $(m \times m)$-matrix is denoted by $E_m$.
\paragraph{Skew group algebras} 
Skew group algebras are a further generalization of usual group algebras, with their construction being somewhat analogous to the semidirect product of groups. Let $G$ be a finite group, $\KK$ be a field, and $\theta: G \to \Aut(\KK)$ be a group homomorphism. The \emph{skew group algebra} $\KK \ast_{\theta} G$ of $G$ over $\KK$ is the set of formal sums 
\[ 
    \KK \ast_{\theta} G = \left\{ \sum_{g \in G} a_g g \mid a_g \in \KK \right\},
\]
with addition defined componentwise, and multiplication distributively extending the following rule:
\[ 
    g \lambda = \left( \theta(g) (\lambda) \right)  g \quad \text{for } g \in G \text{ and } \lambda \in \KK. 
\]
Consequently, the multiplication of two elements in $\KK \ast_{\theta} G$ is given by
\begin{align*}
    \left( \sum_{g \in G} a_g g \right) \cdot \left( \sum_{g \in G} b_g g \right) &= 
    \sum_{g \in G} \Bigg( \sum_{\substack{h_1, h_2 \in G \\ h_1 h_2 = g}} a_{h_1} \left( \theta(h_1)(b_{h_2}) \right) \Bigg) g = \\ &= 
    \sum_{g \in G} \left( \sum_{h \in G} a_{h} \left( \theta(h)(b_{h^{-1} g}) \right) \right) g.
\end{align*} 
The field $\KK$ and the group $G$ are naturally embedded into $\KK \ast_{\theta} G$ via the maps $\lambda \mapsto \lambda e$ and $g \mapsto 1g$, respectively. Thus, the multiplication by scalars is defined as an instance of generic multiplication.

In the following, matrix rings over skew group algebras will appear as direct summands in the Wedderburn-like decomposition of metacyclic group algebras. Thus it is essential to gain a deeper understanding of their algebraic structure.

The following proposition, essentially proven in \cite[Corollary 29.8]{Reiner2003}, establishes the isomorphism between skew group algebras and matrix algebras in certain cases. For the sake of convenience and completeness, we also provide an alternative proof. 

\begin{proposition} \label{prop:skew}
    Let $\KK$ be a field, $G$ be a finite group, $\theta: G \to \Aut(\KK)$ be a group monomorphism. Then $\KK \ast_{\theta} G$ is isomorphic to $\Mat_{|G|}(\kk)$, where 
    \[ 
        \kk = \left\{ \mu \in \KK \mid \forall g \in G \;\: \theta(g)(\mu) = \mu \right\}
    \] is the fixed field of $G$.
\end{proposition}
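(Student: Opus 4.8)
The plan is to show $\KK \ast_\theta G \cong \End_{\kk}(V)$ for a suitable $|G|$-dimensional $\kk$-vector space $V$, which gives the matrix ring $\Mat_{|G|}(\kk)$ after choosing a basis. The natural candidate for $V$ is $\KK$ itself, viewed as a vector space over the fixed field $\kk$. Indeed, by Artin's theorem on linear independence of characters (or the standard Galois-theoretic dimension count), since $\theta$ is a monomorphism, $G$ is a group of $|G|$ distinct automorphisms of $\KK$ with fixed field $\kk$, so $[\KK : \kk] = |G|$ and $\KK/\kk$ is Galois with group $\theta(G) \cong G$. Thus $\dim_{\kk} \End_{\kk}(\KK) = |G|^2$, which matches $\dim_{\kk}(\KK \ast_\theta G) = |G| \cdot [\KK:\kk] = |G|^2$, so a well-defined injective (or surjective) $\kk$-algebra homomorphism between them is automatically an isomorphism.

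\textbf{The homomorphism.} First I would define $\Phi : \KK \ast_\theta G \to \End_{\kk}(\KK)$ on generators by sending $\lambda \in \KK$ to the $\kk$-linear multiplication map $m_\lambda : x \mapsto \lambda x$, and sending $g \in G$ to the $\kk$-linear map $\theta(g) : \KK \to \KK$ (which is indeed $\kk$-linear since it fixes $\kk$ pointwise and is a field automorphism). I then extend $\Phi$ additively and $\KK$-semilinearly; the key point is to check that $\Phi$ respects the twisted multiplication rule $g\lambda = (\theta(g)(\lambda))\, g$. Concretely, $\Phi(g)\Phi(\lambda) = \theta(g) \circ m_\lambda$, and for any $x \in \KK$ we have $(\theta(g) \circ m_\lambda)(x) = \theta(g)(\lambda x) = \theta(g)(\lambda)\,\theta(g)(x) = m_{\theta(g)(\lambda)}(\theta(g)(x)) = \bigl(\Phi(\theta(g)(\lambda))\,\Phi(g)\bigr)(x) = \Phi\bigl((\theta(g)(\lambda))\,g\bigr)(x)$, so the defining relation is respected. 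Multiplicativity on general elements then follows by distributivity and the explicit multiplication formula for $\KK \ast_\theta G$.

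\textbf{Surjectivity / injectivity.} Once $\Phi$ is a well-defined $\kk$-algebra homomorphism, it suffices to prove it is surjective (injectivity then being free by the dimension count above, or conversely). For surjectivity I would invoke the Jacobson density theorem, or more elementarily argue as follows: the image $\Phi(\KK \ast_\theta G)$ contains all multiplications $m_\lambda$, $\lambda \in \KK$, and all the automorphisms $\theta(g)$; by the linear independence of the characters $\{\theta(g)\}_{g \in G}$ (Dedekind/Artin), the map $\KK[G]^{|G|} \to \End_{\kk}(\KK)$ built from $\sum_g \lambda_g \theta(g)$ is already injective on the $|G|^2$-dimensional space $\bigoplus_g \KK\,\theta(g)$, hence surjective onto $\End_\kk(\KK)$ by equality of dimensions. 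Since every $\sum_g \lambda_g \theta(g)$ lies in $\Phi(\KK \ast_\theta G)$, we conclude $\Phi$ is onto, hence an isomorphism. Finally, fixing a $\kk$-basis of $\KK$ identifies $\End_\kk(\KK)$ with $\Mat_{|G|}(\kk)$, completing the proof.

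\textbf{Main obstacle.} The only genuinely non-formal ingredient is establishing that $[\KK:\kk] = |G|$ together with the linear independence of $\{\theta(g) : g \in G\}$ over $\KK$ — i.e. Artin's theorem — which is what makes the dimension count work and forces bijectivity; everything else is a routine but careful verification that $\Phi$ is a ring homomorphism respecting the skew relation. I would state Artin's theorem explicitly as the tool being used (rather than reprove it) and devote the bulk of the written proof to the well-definedness of $\Phi$.
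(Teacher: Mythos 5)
Your proposal is correct and follows essentially the same route as the paper: both send $\sum_g \lambda_g g$ to $\sum_g \lambda_g \theta(g) \in \End_{\kk}(\KK)$, invoke Artin/Dedekind linear independence of characters together with $[\KK:\kk]=|G|$, and conclude by the dimension count $|G|^2 = |G|^2$. The only cosmetic difference is that you verify well-definedness on generators and argue surjectivity first, while the paper defines the map on general elements and argues injectivity first; these are interchangeable given the dimension count.
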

\begin{proof}
    Let $\sigma: \KK \ast_{\theta} G \to \End_\kk(\KK)$ be $\kk$-algebra homomorphism defined by
    \[ 
        \sigma \left( \sum_{g \in G} \lambda_g g \right) =
        \sum_{g \in G} \lambda_g \theta(g).
    \]
    First, we show that $\sigma$ is injective. Indeed, assume $\sum_{g \in G} \lambda_g g \in \ker(\sigma)$; then, since each $\theta(g) \in \Aut(\KK)$ can be considered as a multiplicative character on $\KK^{*}$, by the linear independence of characters theorem (see  \cite[\S VI.4]{Lang}), we obtain $\lambda_g = 0$ for all $g \in G$.

    We also have $[\KK:\kk] = |G|$ (see Theorem 1.8 of \cite[\S VI.1]{Lang}). Hence
    \[
        \dim_{\kk} \left( \End_\kk(\KK) \right) = |G|^2 = \dim_{\kk} \left( \KK \ast_{\theta} G \right).
    \]
    Thus, $\sigma$ is $\kk$-algebra isomorphism. Since $\End_\kk(\KK) \simeq \Mat_{|G|}(\kk)$, the proof is complete.
\end{proof}

\begin{corollary} \label{col:skew2}
    Let $G = H_1 \times H_2$. Let $\theta|_{ \{ e \} \times H_2 }$ be trivial, $\tilde{\theta} = \theta|_{ H_1 \times \{ e \} }$ be injective. Then
    \[ 
        \KK \ast_{\theta} G \simeq \left( \KK \ast_{\tilde{\theta}} H_1 \right) H_2 \simeq \left( \Mat_{|H_1|} (\kk) \right) H_2 \simeq \Mat_{|H_1|}(\kk H_2),  
    \]
    where
    \( 
        \kk = \left\{ \mu \in \KK \mid \forall h \in H_1 \;\: \tilde{\theta}(h)(\mu) = \mu \right\}. 
    \)    
\end{corollary}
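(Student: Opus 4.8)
The plan is to prove the three isomorphisms of the displayed chain one at a time, reading from left to right. The essential point for the first one, $\KK \ast_{\theta} G \simeq \left( \KK \ast_{\tilde\theta} H_1 \right) H_2$, is that inside $\KK \ast_{\theta} G$ the elements of the subgroup $\{e\} \times H_2$ act trivially on $\KK$ (by the hypothesis on $\theta|_{\{e\} \times H_2}$) and commute with the copy of $H_1 \times \{e\}$ (because $G$ is a direct product); thus, unlike the $H_1$-part, the $H_2$-part carries no twist. I would make this precise by defining a map
\[
    \Phi \colon \left( \KK \ast_{\tilde\theta} H_1 \right) H_2 \longrightarrow \KK \ast_{\theta} G, \qquad \left( \sum_{h_1 \in H_1} a_{h_1} h_1 \right) h_2 \longmapsto \sum_{h_1 \in H_1} a_{h_1} (h_1, h_2),
\]
extended additively over the sum on $H_2$, where the domain is understood as the ordinary group ring of $H_2$ with coefficients in the (in general non-commutative) ring $\KK \ast_{\tilde\theta} H_1$, on which $H_2$ acts trivially.

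Because $\Phi$ carries the $\KK$-basis $\{ h_1 h_2 \mid h_1 \in H_1,\ h_2 \in H_2 \}$ of the domain bijectively onto the $\KK$-basis $\{ (h_1, h_2) \mid h_1 \in H_1,\ h_2 \in H_2 \}$ of the codomain, it is bijective, so it remains to check that it is a ring homomorphism. This reduces, by distributivity, to checking compatibility on products of basis elements, which in turn follows from the two commutation facts above together with the identification $\theta(h_1, e) = \tilde\theta(h_1)$ on $\KK$: on the $H_1$-coordinate the multiplication rule of $\KK \ast_{\tilde\theta} H_1$ is reproduced verbatim, and the $H_2$-coordinate multiplies without interacting with scalars or with $H_1$. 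This first step is the one that requires the most care, since one must be fluent with group rings over non-commutative coefficient rings and verify that the skew multiplication on $\KK \ast_{\theta} G$ genuinely degenerates, in the $H_2$-direction, to an untwisted group-ring multiplication — precisely where the two hypotheses on $\theta$ and the direct-product structure enter.

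The remaining two isomorphisms are then routine. For $\left( \KK \ast_{\tilde\theta} H_1 \right) H_2 \simeq \left( \Mat_{|H_1|}(\kk) \right) H_2$, I would apply Proposition \ref{prop:skew} to the group $H_1$ and the monomorphism $\tilde\theta$, obtaining a ring isomorphism $\KK \ast_{\tilde\theta} H_1 \simeq \Mat_{|H_1|}(\kk)$ with $\kk$ the stated fixed field, and then note that any ring isomorphism $R \simeq S$ induces an isomorphism of group rings $R H_2 \simeq S H_2$ by acting coefficient-wise. For the last isomorphism $\left( \Mat_{|H_1|}(\kk) \right) H_2 \simeq \Mat_{|H_1|}(\kk H_2)$, I would use the standard identification $\Mat_{k}(R) H \simeq \Mat_{k}(R H)$, valid for any ring $R$ and finite group $H$, which sends $\sum_{h \in H} M_h h$ to the matrix with $(i,j)$-entry $\sum_{h \in H} (M_h)_{ij}\, h \in RH$; this is visibly bijective and additive, and a one-line index computation shows it preserves products (here one uses that $h$ commutes with the matrix entries in $RH$). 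Taking $R = \kk$, $H = H_2$, and $k = |H_1|$ finishes the proof.
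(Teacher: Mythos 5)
Your proof is correct and fills in exactly the argument the paper leaves implicit: the corollary is stated without proof as an immediate consequence of Proposition~\ref{prop:skew}, and your three-step chain (untwisting the $H_2$-direction using triviality of $\theta$ on $\{e\}\times H_2$ and the direct-product commutation, applying Proposition~\ref{prop:skew} to $(H_1,\tilde{\theta})$ coefficient-wise, and the standard identification $\Mat_k(R)H \simeq \Mat_k(RH)$) is the intended route. No gaps.
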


\section{Structure of finite metacyclic group algebras}
\label{sec:structure}
In this section, we obtain an exlicit Wedderburn-like decomposition of the split metacyclic group algebras in the case $\gcd(q, n) = 1$.
\emph{Hereinafter in this paper, we assume that $\gcd(q,n) = 1$ and $r \not\equiv 1 \Mod{n}$. In addition, $A$ and $B$ stand for the cyclic subgroups of $G_{n,m,r}$ generated by $a$ and $b$, respectively.}

Before presenting the main result of this section, we develop some necessary preliminary results on factorization of $x^n - 1$. Recall that the $d$-th cyclotomic polynomial $Q_d(x)$ is defined as the polynomial whose roots are the primitive $d$-th roots of unity in some extension field of $\FF_q$. Additionally, as is well-known, $x^n-1 = \prod_{d \mid n} Q_d(x)$, and hence any irreducible factor of $x^n-1$ is a divisor of some $Q_d(x)$. Let $t_d$ denote the smallest positive integer such that $q^{t_d} \equiv 1 \Mod{d}$.

Given a monic irreducible factor $f(x)$ of $Q_d(x)$, its \emph{$r$-reciprocal polynomial $f^{(r)}(x)$} is defined as the monic minimal polynomial of $\beta^r$, where $\beta$ denotes a root of $f(x)$ in some extension of $\FF_q$. The polynomial $f(x)$ is said to be \emph{r-self-reciprocal} if and only if $f(x) = f^{(r)}(x)$.

Since $\gcd(r,n) = 1$, it follows that $\ord(\beta^r) = \ord(\beta) = d$, implying that $f^{(r)}(x) \mid Q_d(x)$. Additionally, one can easily note that $f(x)$ and $f^{(r)}(x)$ have factorizations of the form
\begin{equation} \label{eq:fr}
    f(x) = \prod_{j=0}^{t_d-1} (x - \beta^{q^j}), \quad\quad
    f^{(r)}(x) = \prod_{j=0}^{t_d-1} (x - \beta^{rq^j}),
\end{equation}
over the splitting field of $Q_d(x)$, respectively.

Let $\mathcal{D}_{\FF_q}(g)$ denote the set of all irreducible divisors of a polynomial $g$ over $\FF_q$. Define an action of $B$ on $\mathcal{D}(x^n-1)$ as follows:
\[ 
    b^j.f(x) = f^{(r^j)}(x) \quad \text{for $b^j \in B$ and $f(x) \in \mathcal{D}(x^n-1)$}. 
\]
This is indeed a group action since $\left( f^{(r^i)} \right)^{(r^j)}(x) = f^{(r^{i+j})}(x)$ and $f^{(r^m)}(x) = f(x)$. Now, let
\begin{itemize}
    \item $O_1, \dots, O_{\omega}$ be the set of orbits of $\mathcal{D}(x^n-1)$ under this action;
    \item $f_1, \dots, f_{\omega}$ be a system of representatives for $O_1, \dots, O_{\omega}$;
    \item $\alpha_1, \dots, \alpha_n$ be roots of $f_1, \dots, f_{\omega}$ in some extensions of $\FF_q$, respectively;
    \item $B_1, \dots, B_{\omega}$ be stabilizers of $f_1, \dots, f_{\omega}$, respectively;
    \item $s_i = |O_i|$ and $u_i = |B_i| = m/s_i$ for $i \in \intrange{1}{\omega}$.
\end{itemize}
One can easily note that $B_i = \left\{ b^j \in B \mid b^j.f_i(x) = f_i(x) \right\}$ is a cyclic subgoup of $B$ of order $u_i$, and hence $B_i = \left< b^{s_i} \right>$. To simplify the notation, we will denote its generator $b^{s_i}$ by $h_i$.

Let $d$ be such that $f_i(x) \mid Q_d(x)$ (or equivalently, $\ord(\alpha_i) = d$). Since $b^{s_i}.f_i(x) = f_i(x)$, it follows that $f_i(x)$ is $r^{s_i}$-self-reciprocal polynomial and, consequently, $\alpha_i^{r^{s_i}}$ is a root of $f_i(x)$. This is possible if and only if either $\alpha_i^{r^{s_i}} = \alpha_i$ or $\alpha_i^{r^{s_i}} = \alpha_i^{q^{k}}$ for some $k$ (see \eqref{eq:fr}). In other words, there exists a positive integer $k$ such that $q^k \equiv r^{s_i} \Mod{d}$. 

Given the above, by $\theta_i: B_i \to \Aut(\FF_q[\alpha_i])$ we denote a group homomorphism defined by generator $h_i$ of $B_i$ as
\[ 
    \theta_i(h_i): P(\alpha_i) \mapsto \left( P(\alpha_i) \right)^{q^k} = P\left( \alpha_i^{q^k} \right),    
\]
where $P(\alpha_i) \in \FF_q[\alpha_i]$. Note that $q^k \equiv r^{s_i} \Mod{d}$ and $\ord(\alpha_i) = d$ imply $\alpha_i^{q^k}  = \alpha_i^{r^{s_i}}$.

The following theorem provides an explicit decomposition of finite split metacyclic group algebras, given the factorization of $x^n-1$ and the group action defined above.
\begin{theorem} \label{theor:decomp}
    Let $gcd(n,q) = 1$. The group algebra $\FF_q G_{n,m,r}$ has a decomposition of the following form:
    \begin{equation} \label{eq:decomp}
        \FF_q G_{n,m,r} \overset{}{\simeq} \bigoplus_{i=1}^{\omega} \calA_i, \quad \text{where } \calA_i = \Mat_{s_i}(\FF_q[\alpha_i] \ast_{\theta_i} B_i).
    \end{equation} 
    Morover, the isomorphism is given by $\tau = \bigoplus_{i=1}^{\omega} \tau_i$, where the homomorphisms $\tau_i: \FF_qG_{n,m,r} \to \calA_i$ are defined on generators of $G_{n,m,r}$ as follows:
    \begin{enumerate}[(i)]
        \item if $s_i = 1$ (and hence $\calA_i = \FF_q[\alpha_i] \ast_{\theta_i} B$):
        \[ \tau(a) = \alpha_i, \quad \tau_i(b) = h_i = b, \]
        \item if $s_i \neq 1$:
        \[
            \tau(a) = \diag\left( \alpha_i, \alpha_i^r, \alpha_i^{r^2}, \dots, \alpha_i^{r^{s_i-1}} \right), \quad
            \tau(b) =  
            \begin{bNiceArray}{c|cw{c}{1cm}c}[margin]
                0 & \Block{3-3}<\Large>{E_{s_i - 1}} & & \\
                \Vdots & & & \\
                0 & & & \\ \hline
                h_i & 0 & \Cdots & 0
            \end{bNiceArray}.
        \]
    \end{enumerate}
\end{theorem}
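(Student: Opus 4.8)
The plan is to build the isomorphism $\tau$ explicitly by decomposing $\FF_q G_{n,m,r}$ through its restriction to the normal cyclic subgroup $A = \langle a \rangle$. First I would note that $\FF_q A \simeq \FF_q[x]/(x^n-1) \simeq \bigoplus_{g \in \mathcal{D}(x^n-1)} \FF_q[x]/(g(x))$ by the semisimplicity hypothesis $\gcd(q,n)=1$; each simple component $\FF_q[x]/(g(x)) \simeq \FF_q[\beta]$ carries an action of $b$ by conjugation, namely $b(a)b^{-1} = a^r$, which sends the component indexed by $f$ to the one indexed by $f^{(r)}$. Thus the $B$-action on $\mathcal{D}(x^n-1)$ introduced before the theorem is exactly the permutation action of conjugation on the simple components of $\FF_q A$, and $\FF_q G_{n,m,r} = (\FF_q A) \ast B$ (a skew group ring of $B$ over the semisimple algebra $\FF_q A$, using $b a b^{-1} = a^r$) breaks up as a direct sum indexed by the orbits $O_1, \dots, O_\omega$.

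Next I would treat a single orbit $O_i$. The corresponding two-sided ideal $\calA_i$ of $\FF_q G_{n,m,r}$ is, as an $\FF_q B$-module, induced from the stabilizer $B_i = \langle h_i \rangle$ acting on the single component $\FF_q[\alpha_i]$; concretely $\calA_i \simeq \mathrm{Ind}_{A \rtimes B_i}^{G_{n,m,r}}\big( \FF_q[\alpha_i] \ast_{\theta_i} B_i \big)$, and a transversal $\{e, b, b^2, \dots, b^{s_i-1}\}$ of $B_i$ in $B$ gives the $s_i \times s_i$ matrix structure. This is the standard fact that inducing a module over a subalgebra along a set of $s_i$ coset representatives produces an $s_i \times s_i$ matrix algebra over the endomorphism ring of the induced module; here the endomorphism ring is $\FF_q[\alpha_i] \ast_{\theta_i} B_i$, where the twist $\theta_i$ records how $h_i = b^{s_i}$ acts on $\FF_q[\alpha_i]$ — precisely the map $P(\alpha_i) \mapsto P(\alpha_i^{r^{s_i}}) = P(\alpha_i^{q^k})$ identified in the discussion preceding the theorem (here one uses that $r^{s_i}$-self-reciprocity forces $r^{s_i} \equiv q^k \Mod{d}$, so that the abstract automorphism $\alpha_i \mapsto \alpha_i^{r^{s_i}}$ is a genuine element of $\Aut(\FF_q[\alpha_i]/\FF_q)$, not merely a ring map). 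Conjugating the standard matrix units by the diagonal embedding $a \mapsto \diag(\alpha_i, \alpha_i^r, \dots, \alpha_i^{r^{s_i-1}})$ yields exactly the formulas in (i) and (ii): the image of $a$ is diagonal because each coset representative $b^j$ conjugates $a$ to $a^{r^j}$, hence sends $\alpha_i$ to $\alpha_i^{r^j}$, and the image of $b$ is the given companion-type permutation matrix with $h_i$ in the bottom-left corner because $b$ permutes the cosets cyclically while $b^{s_i} = h_i$ lands back in $B_i$.

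To finish I would verify that the prescribed assignments actually define an algebra homomorphism $\tau_i$ (check $\tau_i(a)^n = E_{s_i}$, $\tau_i(b)^m = E_{s_i}$ using $h_i^{u_i} = b^m = e$, and the braid relation $\tau_i(b)\tau_i(a) = \tau_i(a)^r \tau_i(b)$, all of which are direct matrix computations exploiting the diagonal form and the skew-multiplication rule $h_i \lambda = \theta_i(h_i)(\lambda) h_i$), so that $\tau = \bigoplus_i \tau_i$ is a well-defined algebra map $\FF_q G_{n,m,r} \to \bigoplus_i \calA_i$. Then I would argue surjectivity and injectivity simultaneously by a dimension count: $\dim_{\FF_q} \calA_i = s_i^2 \cdot u_i \cdot [\FF_q[\alpha_i] : \FF_q] = s_i^2 u_i t_d$ where $d = \ord(\alpha_i)$, and $[\FF_q[\alpha_i] : \FF_q] = \deg f_i = t_d$ with each orbit $O_i$ containing $s_i$ such factors, so the total is $\sum_i s_i \cdot s_i u_i t_d = \sum_i (s_i t_d) \cdot (s_i u_i) = \sum_{f \in \mathcal{D}(x^n-1)} (\deg f) \cdot m = nm = |G_{n,m,r}|$; combined with injectivity of each $\tau_i$ on its block (which follows from Proposition~\ref{prop:skew} and the linear independence of characters applied componentwise) this forces $\tau$ to be an isomorphism. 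The main obstacle I anticipate is the bookkeeping in the induced-module step — correctly identifying the endomorphism algebra of the induced module as the twisted group algebra $\FF_q[\alpha_i] \ast_{\theta_i} B_i$ with exactly the right twist, and making sure the matrix-unit conjugation produces the bottom-left $h_i$ placement rather than some conjugate or transpose; everything else reduces to the relation $bab^{-1}=a^r$ and a clean dimension count.
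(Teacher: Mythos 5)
Your verification half --- checking the defining relations $\tau_i(a)^n = E_{s_i}$, $\tau_i(b)^m = E_{s_i}$, $\tau_i(b)\tau_i(a) = \tau_i(a)^r\tau_i(b)$ on generators, followed by the dimension count $\sum_i s_i^2 u_i \deg(f_i) = mn$ --- is exactly the paper's proof; the Clifford-theoretic framing via $\FF_qG_{n,m,r} = (\FF_qA)\ast B$ and induction from the stabilizer $A \rtimes B_i$ is extra motivation the paper does not supply, and it is sound as an explanation of where the formulas come from.

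The one step that does not hold up as written is injectivity. You claim it follows from ``injectivity of each $\tau_i$ on its block, which follows from Proposition~\ref{prop:skew} and the linear independence of characters applied componentwise.'' Proposition~\ref{prop:skew} describes the internal structure of $\KK \ast_\theta G$ as an abstract algebra and says nothing about the kernel of the evaluation map $\tau_i: \FF_qG_{n,m,r} \to \calA_i$ (each individual $\tau_i$ is far from injective when $\omega > 1$); and to even speak of ``the block'' of $\FF_qG_{n,m,r}$ on which $\tau_i$ should be injective, you need the orbit-sum central idempotents, i.e.\ a piece of the decomposition you are in the middle of proving. Since the dimension count by itself yields neither injectivity nor surjectivity, this step is load-bearing. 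The repair is short and is what the paper does: write $P = \sum_{j=0}^{m-1} P_j(a)b^j$ with $\deg P_j < n$; if $\tau(P) = 0$, then reading off the matrix entries of every $\tau_i(P)$ shows $P_j(\alpha_i^{r^l}) = 0$ for all $i \in \intrange{1}{\omega}$, $l \in \intrange{0}{s_i-1}$, $j \in \intrange{0}{m-1}$, so each $P_j$ is divisible by $x^n - 1 = \prod_{i}\prod_{l} f_i^{(r^l)}(x)$ and hence vanishes. With that substitution your argument is complete.
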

\begin{proof}
    First, we verify that $\tau_i$, $i \in \intrange{1}{\omega}$ are indeed homomorphisms. To do this, it is sufficient to check that the defining relations of $G_{n,m,r}$ hold for the images of its generators. In the case (i), we have
    \[ 
        \left( \tau_i(a) \right)^n = \alpha_i^n = 1, \quad \left( \tau_i(b) \right)^m = b^m = 1,
    \] \[
        \tau_i(b) \tau_i(a) = b \alpha_i = \theta_i(\alpha_i) b = \alpha_i^{r} b = \left( \tau_i(a) \right)^r \tau_i(b).
    \]
    In the case (ii), we have
    $\left( \tau_i(a) \right)^n = E_{s_i}$, $\left( \tau_i(b) \right)^m = \left( \diag(h_i, \dots, h_i) \right)^{u_i} = E_{s_i}$, and 
    \begin{align*}    
        \tau_i(b) \tau_i(a) = 
        \begin{bNiceArray}{c|cw{c}{2.4cm}c}[margin]
            0 & \Block{3-3}<>{\diag(\alpha_i^r, \dots, \alpha_i^{r^{s_i - 1}})} & & \\
            \Vdots & & & \\
            0 & & & \\ \hline
            h_i \alpha_i & 0 & \Cdots & 0
        \end{bNiceArray} &= 
        \begin{bNiceArray}{c|cw{c}{2.4cm}c}[margin]
            0 & \Block{3-3}<>{\diag(\alpha_i^r, \dots, \alpha_i^{r^{s_i - 1}})} & & \\
            \Vdots & & & \\
            0 & & & \\ \hline
            \alpha_i^{r^{s_i}} h_i & 0 & \Cdots & 0
        \end{bNiceArray} = \\ &=
        \left( \tau_i(a) \right)^r \tau_i(b).
    \end{align*}
    Therefore, $\tau_i$, $i \in \intrange{1}{\omega}$, are indeed homomorphisms.

    Now, we prove that $\tau$ is injective. Given an arbitary element $P \in \FF_qG_{n,m,r}$, which can be represented as $P = \sum_{j=0}^{m-1}P_j(a)b^i$, where $P_j(x) \in \FF_q[x]_{< n}$, we have 
    \[
        \tau_i(P) = \sum_{j=0}^{m-1} P_{j}(\alpha_i) b^j
    \]
    in the case $s_i = 1$, and $\tau_i(P) =$
    \begin{equation*}
        = \sum_{z=0}^{u_i}
        \left[ \begin{array}{c|c|c|c|c}
            P_{zs_i}(\alpha_i) & 
            P_{1 + zs_i}(\alpha_i) &
            P_{2 + zs_i}(\alpha_i) & \cdots &
            P_{s_i-1 + zs_i}(\alpha_i) 
            \\ 
            P_{s_i-1 + zs_i}(\alpha_i^r) h_i & 
            P_{zs_i}(\alpha_i^r) &
            P_{1 + zs_i}(\alpha_i^r) & \cdots &
            P_{s_i-2 + zs_i}(\alpha_i^r)
            \\ 
            P_{s_i-2 + zs_i}(\alpha_i^{r^2}) h_i & 
            P_{s_i-1 + zs_i}(\alpha_i^{r^2}) h_i &
            P_{zs_i}(\alpha_i^{r^2}) & \cdots &
            P_{s_i-1 + zs_i}(\alpha_i^{r^2})
            \\ \hline
            \vdots & \vdots & \vdots & \ddots & \vdots 
            \\ \hline
            P_{1 + zs_i}(\alpha_i^{r^{s_i-1}}) h_i & 
            P_{2 + zs_i}(\alpha_i^{r^{s_i-1}}) h_i &
            P_{3 + zs_i}(\alpha_i^{r^{s_i-1}}) h_i & \cdots &
            P_{zs_i}(\alpha_i^{r^{s_i-1}})
        \end{array} \right] h_i^z
    \end{equation*}
    otherwise. Hence if $\tau(P) = 0$, then $P_j(\alpha_i^{r^l}) = 0$ for all $j \in \intrange{0}{m-1}$, $i \in \intrange{1}{\omega}$, $l \in \intrange{0}{s_i-1}$. This implies that all $P_j(x)$ are divisible by the polynomial $x^n-1 = \prod_{i=1}^{\omega}\prod_{l=0}^{s_i-1} f_i^{(r^l)}(x)$. Since $\deg P_j < n$, it follows that $P=0$, and therefore $\tau$ is injective.

    Finally, we have
    \begin{align*}
      \dim_{\FF_q}\left( \bigoplus_{i=1}^{\omega} \calA_i \right) &= 
      \sum_{i=1}^{\omega} \left( s_i^2 u_i \deg(f_i) \right) = m \sum_{i=1}^{\omega} \left( s_i \deg(f_i) \right) = \\ &=
      m \sum_{i=1}^{\omega} \sum_{l=0}^{s_i-1} \deg\left( f_i^{(r^l)} \right) = mn = \dim_{\FF_q}(\FF_qG_{n,m,r}).
    \end{align*}
    Therefore, $\tau$ is an $\FF_q$-algebra isomorphism.
\end{proof}

\begin{remark} \label{rem:decomp}
    If $\theta_i$ is the trivial homomorphism, i.e., when $\alpha_i = \alpha_i^{r^{s_i}}$, then $\FF_q[\alpha_i] \ast_{\theta_i} B_i$ equals the group algebra $\FF_q[\alpha_i] B_i$. Therefore, we have
    \begin{equation} \label{eq:ai}
        \calA_i = \begin{cases}
            \FF_q[\alpha_i] \ast_{\theta_i} B, & \text{$s_i = 1$ and $\alpha_i \neq \alpha_i^{r}$},  \\
            \Mat_{s_i}(\FF_q[\alpha_i] \ast_{\theta_i} B_i), & \text{$s_i > 1$ and $\alpha_i \neq \alpha_i^{r^{s_i}}$}, \\
            \FF_q[\alpha_i] B, & \text{$s_i = 1$ and $\alpha_i = \alpha_i^{r}$}, \\ 
            \Mat_{s_i}(\FF_q[\alpha_i] B_i), & \text{$s_i > 1$ and $\alpha_i = \alpha_i^{r^{s_i}}$}.
        \end{cases}         
    \end{equation} 
\end{remark}

\begin{remark}
    If $n$ is a divisor of $q-1$, then all factors of $x^n-1$ are of the form $f_i(x) = x - \alpha_i$, and hence all summands \eqref{eq:ai} are of the form $\FF_q[\alpha_i] B$ and $\Mat_{s_i}(\FF_q[\alpha_i] B_i)$. 
\end{remark}

For further study of metacyclic codes, only the decomposition given in \eqref{eq:decomp} and \eqref{eq:ai} will be leveraged. However, since skew group algebras can have a rather complex algebraic structure, it could be also useful to further refine this decomposition to eliminate skew group algebras. This can be done using
\begin{itemize}
    \item \Cref{prop:skew} if $\theta_i$ is a monomorphism. In this case, we have
    \[ 
        \FF_q[\alpha_i] \ast_{\theta_i} B_i \simeq \Mat_{u_i} (\FF_{q^{\deg(f_i)/u_i}}),
    \]
    and hence $\Mat_{s_i}(\FF_q[\alpha_i] \ast_{\theta_i} B_i) \simeq \Mat_{s_i \cdot u_i} (\FF_{q^{\deg(f_i)/u_i}}) = \Mat_m(\FF_{q^{\deg(f_i)/u_i}})$;
    \item  \Cref{col:skew2} if $B_i$ can be decomposed into an inner direct product of its subgroups $H_1, H_2$, such that $\theta_{i} |_{H_1}$ is a monomorphism and $\theta_{i} |_{H_2}$ is trivial;
    \item the evaluation isomorphism of \cite[Sect. 5]{Caruso2023} if $\gcd(u_i, q) = 1$ to decompose $\FF_q[\alpha_i] \ast_{\theta_i} B_i$ into a direct sum of matrix algebras over fields.
\end{itemize}
In particular, the following proposition refines \eqref{eq:decomp} in the case when $m$ is a prime.
\begin{proposition} \label{prop:mprime}
    Let $\gcd(q,n)=1$ and $m$ be prime. Let 
    \begin{align*}
        \Omega_1 &= \left\{ i \in \intrange{1}{\omega} \mid \alpha_i = \alpha_i^r \right\}, \\
        \Omega_2 &= \left\{ i \in \intrange{1}{\omega} \mid \alpha_i \neq \alpha_i^r, \; f_i(x) = f_i^{(r)}(x) \right\}, \\
        \Omega_3 &= \left\{ i \in \intrange{1}{\omega} \mid \alpha_i \neq \alpha_i^r, \; f_i(x) \neq f_i^{(r)}(x) \right\}.
    \end{align*}
    Then $\FF_qG_{n,m,r} \simeq$
    \begin{align} \label{eq:decomp_mprime}
        &\simeq 
        \left( \bigoplus_{i \in \Omega_1} \FF_q[\alpha_i] B \right) \oplus
        \left( \bigoplus_{i \in \Omega_2} \FF_q[\alpha_i] \ast_{\theta_i} B \right) \oplus
        \left( \bigoplus_{i \in \Omega_3} \Mat_m(\FF_q[\alpha_i]) \right) \simeq \\ \nonumber
        &\simeq
        \left( \bigoplus_{i \in \Omega_1} \FF_q[\alpha_i] B \right) \oplus
        \left( \bigoplus_{i \in \Omega_2} \Mat_m(
            \underbrace{
                \FF_q[\alpha_i + \alpha_i^r + \dots + \alpha_i^{r^{m-1}}]
            }_{\FF_{q^{\deg(f_i)/m}}}
        ) \right) \oplus \\ \nonumber &\oplus
        \left( \bigoplus_{i \in \Omega_3} \Mat_m(\FF_q[\alpha_i]) \right). \nonumber
    \end{align}
\end{proposition}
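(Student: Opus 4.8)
The plan is to read off \eqref{eq:decomp_mprime} from \Cref{theor:decomp} by specialising to $m$ prime, and then to apply \Cref{prop:skew} once. Start from $\FF_qG_{n,m,r}\simeq\bigoplus_{i=1}^{\omega}\calA_i$ with $\calA_i$ as in \eqref{eq:ai}. Each stabiliser $B_i$ is a subgroup of the cyclic group $B\cong C_m$ of prime order, so $u_i=|B_i|\in\{1,m\}$ and hence $s_i=m/u_i\in\{m,1\}$. Thus every summand is either $\FF_q[\alpha_i]\ast_{\theta_i}B$ (when $s_i=1$, whence $B_i=B$) or $\Mat_m\!\big(\FF_q[\alpha_i]\ast_{\theta_i}\{e\}\big)=\Mat_m(\FF_q[\alpha_i])$ (when $s_i=m$, whence $B_i=\{e\}$); it only remains to sort the indices into $\Omega_1,\Omega_2,\Omega_3$ and, within the $s_i=1$ summands, to separate the cases where $\theta_i$ is or is not trivial.

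The bookkeeping rests on two elementary observations: $s_i=1\iff b.f_i=f_i\iff f_i=f_i^{(r)}$, and $\alpha_i=\alpha_i^r$ forces $f_i=f_i^{(r)}$ (since $f_i^{(r)}$ is then the minimal polynomial of $\alpha_i^r=\alpha_i$); moreover $\theta_i$ is trivial exactly when $\alpha_i^r=\alpha_i$, as $\theta_i(b)$ fixes the generator $\alpha_i\mapsto\alpha_i^r$ iff $\alpha_i^r=\alpha_i$. Hence: if $\alpha_i=\alpha_i^r$ then $s_i=1$ and $\calA_i=\FF_q[\alpha_i]B$, i.e.\ $i\in\Omega_1$; if $\alpha_i\neq\alpha_i^r$ and $f_i=f_i^{(r)}$ then $s_i=1$ with $\theta_i$ non-trivial, so $\calA_i=\FF_q[\alpha_i]\ast_{\theta_i}B$, i.e.\ $i\in\Omega_2$; and if $\alpha_i\neq\alpha_i^r$ and $f_i\neq f_i^{(r)}$ then the orbit of $f_i$ has more than one element, so $s_i=m$ and $\calA_i=\Mat_m(\FF_q[\alpha_i])$, i.e.\ $i\in\Omega_3$. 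Since $\Omega_1,\Omega_2,\Omega_3$ partition $\intrange{1}{\omega}$ by construction, regrouping the summands of $\bigoplus_i\calA_i$ gives the first isomorphism in \eqref{eq:decomp_mprime}.

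For the second isomorphism only the $\Omega_2$-summands require work. When $i\in\Omega_2$ the map $\theta_i\colon B\to\Aut(\FF_q[\alpha_i])$ is non-trivial, because $\theta_i(b)(\alpha_i)=\alpha_i^r\neq\alpha_i$; and a non-trivial homomorphism out of a group of prime order is injective, so $\theta_i$ is a monomorphism and \Cref{prop:skew} applies. It gives $\FF_q[\alpha_i]\ast_{\theta_i}B\simeq\Mat_m(\kk_i)$, where $\kk_i$ is the fixed field of $\theta_i(B)$ in $\FF_q[\alpha_i]$; from $[\FF_q[\alpha_i]:\kk_i]=|B|=m$ one reads off $|\kk_i|=q^{\deg(f_i)/m}$, identifying $\calA_i\simeq\Mat_m(\FF_{q^{\deg(f_i)/m}})$. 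For the explicit description $\kk_i=\FF_q[\alpha_i+\alpha_i^r+\dots+\alpha_i^{r^{m-1}}]$ I would use that $r^m\equiv1\Mod n$ and $\ord(\alpha_i)\mid n$ give $\alpha_i^{r^m}=\alpha_i$ and $\theta_i(b)^j(\alpha_i)=\alpha_i^{r^j}$, so $t_i:=\sum_{j=0}^{m-1}\alpha_i^{r^j}$ equals the relative trace $\operatorname{Tr}_{\FF_q[\alpha_i]/\kk_i}(\alpha_i)$ and in particular is $\theta_i(B)$-invariant; hence $\FF_q[t_i]\subseteq\kk_i$.

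The step I expect to be the real obstacle is the reverse inclusion $\kk_i\subseteq\FF_q[t_i]$, i.e.\ the assertion that this single invariant $t_i$ already generates the fixed field over $\FF_q$: since $\FF_q[t_i]$ and $\kk_i$ are both subfields of the finite field $\FF_q[\alpha_i]$, this amounts to $t_i$ avoiding every proper subfield of $\kk_i$, which is not automatic for an arbitrary root $\alpha_i$ of $f_i$. I would settle this either by replacing $t_i$ with the full generating set, using $\FF_q[e_1,\dots,e_m]=\kk_i$ where $e_1=t_i,\dots,e_m$ are the elementary symmetric functions of the orbit $\alpha_i,\alpha_i^r,\dots,\alpha_i^{r^{m-1}}$, or by choosing the root $\alpha_i$ of $f_i$ so that its relative trace is a primitive element of $\kk_i/\FF_q$; in either case the conclusion $\calA_i\simeq\Mat_m(\FF_{q^{\deg(f_i)/m}})$ for $i\in\Omega_2$, and hence \eqref{eq:decomp_mprime} read as an identity of abstract finite-dimensional algebras, is unaffected.
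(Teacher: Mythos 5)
Your argument is correct and follows essentially the same route as the paper's own (very terse) proof, which likewise just observes that for prime $m$ each stabiliser $B_i$ is either $B$ or $\{e\}$ and then invokes \Cref{theor:decomp} and \Cref{prop:skew}; you have merely made the case bookkeeping explicit. The one point where you go beyond the paper is your worry about whether the single trace element $t_i=\alpha_i+\alpha_i^r+\dots+\alpha_i^{r^{m-1}}$ really generates the fixed field $\kk_i$ over $\FF_q$, and that worry is justified: take $q=2$, $n=15$, $m=2$, $r=4$, and $f_i(x)=x^4+x+1$ with root $\alpha_i$ of order $15$; then $\alpha_i^{r}=\alpha_i^{4}=\alpha_i+1\neq\alpha_i$ and $f_i^{(r)}=f_i$, so $i\in\Omega_2$, yet $t_i=\alpha_i+\alpha_i^4=1$, whence $\FF_2[t_i]=\FF_2$ while $\kk_i=\FF_4=\FF_{q^{\deg(f_i)/m}}$. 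So the underbraced identification in the statement is not literally correct for an arbitrary choice of root, and your two proposed repairs are both sound: either adjoin all elementary symmetric functions of the orbit $\{\alpha_i,\alpha_i^r,\dots,\alpha_i^{r^{m-1}}\}$ (equivalently, the coefficients of the minimal polynomial of $\alpha_i$ over $\kk_i$), or read the second line of \eqref{eq:decomp_mprime} purely as the abstract isomorphism $\calA_i\simeq\Mat_m\bigl(\FF_{q^{\deg(f_i)/m}}\bigr)$, which is all that \Cref{prop:skew} actually delivers and all that is needed for the proposition.
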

\begin{proof}
    If $m$ is prime, then each $B_i$ is either $B$ or $\{ e \}$. Hence, using Theorem \ref{theor:decomp} and \Cref{prop:skew}, we obtain the claim of the proposition.
\end{proof}

The following example illustrates that the decompositions of dihedral group algebras of \cite{Martinez2015} and generalized dihedral group algebras \cite{Gao2020} can be obtained as particular instances of \Cref{prop:mprime}.

\begin{example}
    Consider metacyclic groups of the form $G_{n,2,r}$. The isomorphism \eqref{eq:decomp}
    \[ 
        \tau: \FF_qG_{n,2,r} \to 
        \left( \bigoplus_{i \in \Omega_1} \FF_q[\alpha_i] B \right) \oplus
        \left( \bigoplus_{i \in \Omega_2} \FF_q[\alpha_i] \ast_{\theta_i} B \right) \oplus
        \left( \bigoplus_{i \in \Omega_3} \Mat_2(\FF_q[\alpha_i]) \right),
    \]
    is given by $\tau = \bigoplus_{i=1}^{\omega} \tau_i$, where
    \begin{itemize}
        \item for $i \in \Omega_1 \cup \Omega_2$:
        \[
            \tau_i(a) = \alpha_i, \quad \tau_i(b) = b;
        \]
        \item for $i \in \Omega_3$:
        \[
            \tau_i(a) = \begin{bmatrix}
                \alpha_i & 0 \\ 0 & \alpha_i^r
            \end{bmatrix}, \quad
            \tau_i(b) = \begin{bmatrix}
                0 & 1 \\ 1 & 0
            \end{bmatrix},
        \]
    \end{itemize}
    Let $i \in \Omega_2$. Recall that $|B| = 2$, $\theta_i(b)(\alpha_i) = \alpha_i^r$, and \Cref{prop:skew} implies that the map
    \begin{align*}
        \sigma_i: \; &\FF_q[\alpha_i] \ast_{\theta_i} B \to \End_{\FF_q[\alpha_i + \alpha_i^r]}(\FF_q[\alpha_i]), \\
        & P(\alpha_i) + Q(\alpha_i)b \mapsto P(\alpha_i) + Q(\alpha_i) \theta_{i}(b)
    \end{align*}
    is an isomorphism. One can easily note that the matrix representations of $\sigma_i(\alpha_i)$ and $\sigma_i(b)$ in the $\FF_q[\alpha_i + \alpha_i^r]$-basis $\{ 1, \alpha_i \}$ of $\FF_q[\alpha_i]$ are
    \[ 
        \begin{bmatrix}
            0 & -\alpha_i \alpha_i^r \\
            1 & \alpha_i + \alpha_i^r
        \end{bmatrix}, \quad
        \begin{bmatrix}
            1 & \alpha_i + \alpha_i^r \\
            0 & -1
        \end{bmatrix},
    \]
    respectively. Hence $\tilde{\sigma_i}: \FF_q[\alpha_i] \ast_{\theta_i} B \to \Mat_2(\FF_q[\alpha_i + \alpha_i^r])$ defined on $\FF_q$-generators of $\FF_q[\alpha_i]$ as follows:
    \[ 
        \tilde{\sigma_i}(\alpha_i) = \begin{bmatrix}
            0 & -\alpha_i \alpha_i^r \\
            1 & \alpha_i + \alpha_i^r
        \end{bmatrix}, \quad \quad
        \tilde{\sigma_i}(b) = 
        \begin{bmatrix}
            1 & \alpha_i + \alpha_i^r \\
            0 & -1
        \end{bmatrix},
    \]
    establishes a further isomorphism between $\FF_q[\alpha_i] \ast_{\theta_i} B$ and $\Mat_2(\FF_q[\alpha_i + \alpha_i^r])$ for $i \in \Omega_2$.
\end{example}

\section{Structure of metacyclic codes}
\label{sec:codes}
This section provides an algebraic description of metacyclic codes by leveraging results of the previous section. Additionally, in this section, a bound on the minimum distance is obtained, and it is shown that metacyclic codes can be viewed as generalized concatenated codes.

As is well-known, there exists a one-to-one correspondence between left ideals of $\Mat_l(R)$, where $R$ is a ring, and left $R$-submodules of $R^l$ (see \cite{Morita1958,Ferraz2021ideals}). Specifically, for any left $R$-submodule $L$ of $R^l$ (i.e., an additive subgroup of $R^l$ such that 
$\lambda l \in L$ for any $\lambda \in R$ and $l \in L$),
there is an associated left ideal of $\Mat_l(R)$ given by
\[ 
    \calI_l(L) = \left\{ 
        \begin{bmatrix}
            \text{---} & x^{(1)} & \text{---} \\ 
            \text{---} & x^{(2)} & \text{---} \\
            & \cdots & \\
            \text{---} & x^{(l)} & \text{---} 
        \end{bmatrix} \in \Mat_l(R) \;\; \Big| \;\; \forall i \in \intrange{1}{l} \;\;\;
        x^{(i)} = \left( x^{(i)}_1, \dots, x^{(i)}_l \right) \in L
    \right\}.
\]
Conversely, any left ideal is associated with a submodule consisting of all rows of matrices from the ideal.

Given that any left ideal in a direct sum of algebras is a direct sum of left ideals in the summands, \Cref{theor:decomp} immediately implies the following theorem which describes metacyclic codes.
\begin{theorem} \label{theor:codes}
    Let $\gcd(q,n) = 1$. Any left metacyclic code $C \subset \FF_qG_{n,m,r}$ can be uniquely desribded via its image under \eqref{eq:decomp}. Specifically,
    \begin{equation} \label{eq:code_decomp}
        \tau(C) = \bigoplus_{i=1}^{\omega} \calI_{s_i}(L_i),
    \end{equation}
    where each $L_i$ is a left $R_i$-submodule of $R_i^{s_i}$,
    \[
        R_i = \begin{cases}
            \FF_q[\alpha_i] B_i, & \alpha_i = \alpha_i^{r^{s_i}}, \\
            \FF_q[\alpha_i] \ast_{\theta_i} B_i, & \alpha_i \neq \alpha_i^{r^{s_i}}.
        \end{cases} 
    \]
\end{theorem}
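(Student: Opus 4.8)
The plan is to combine \Cref{theor:decomp} with the standard Morita-type correspondence between left ideals of a matrix ring $\Mat_l(R)$ and left $R$-submodules of $R^l$, which is recalled just before the statement. First I would observe that, since $\tau$ is an $\FF_q$-algebra isomorphism, it induces a lattice isomorphism between the left ideals of $\FF_qG_{n,m,r}$ and the left ideals of $\bigoplus_{i=1}^{\omega}\calA_i$; hence every left metacyclic code $C$ is uniquely determined by $\tau(C)$. Next I would use the elementary fact that a left ideal of a finite direct sum of rings $\bigoplus_i \calA_i$ is precisely a direct sum $\bigoplus_i J_i$, where each $J_i$ is a left ideal of $\calA_i$ (this follows from the central orthogonal idempotents $e_i$ cutting out the summands: $J_i = e_i \tau(C)$, and conversely any such direct sum is a left ideal). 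So $\tau(C) = \bigoplus_{i=1}^{\omega} J_i$ with $J_i \trianglelefteq_\ell \calA_i = \Mat_{s_i}(R_i)$, where $R_i$ is the ring $\FF_q[\alpha_i]\ast_{\theta_i}B_i$, which by \Cref{rem:decomp} collapses to the group algebra $\FF_q[\alpha_i]B_i$ exactly when $\alpha_i = \alpha_i^{r^{s_i}}$ (i.e.\ when $\theta_i$ is trivial).

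Then I would invoke the $\Mat_l(R)$/submodule correspondence stated in the text: for each $i$ there is a unique left $R_i$-submodule $L_i \subseteq R_i^{s_i}$ with $J_i = \calI_{s_i}(L_i)$, namely $L_i$ is the set of all rows occurring in matrices of $J_i$, and conversely $\calI_{s_i}(L_i)$ recovers $J_i$. (For $s_i = 1$ this is the trivial statement that a left ideal of $R_i$ is a left $R_i$-submodule of $R_i$.) Assembling, $\tau(C) = \bigoplus_{i=1}^{\omega}\calI_{s_i}(L_i)$, and uniqueness of each $L_i$ follows from uniqueness in each step. This gives \eqref{eq:code_decomp}.

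The only point requiring a word of care — and the place I would spend most of the proof — is the claim that the correspondence $L \mapsto \calI_l(L)$ is genuinely a bijection between left $R$-submodules of $R^l$ and left ideals of $\Mat_l(R)$ when $R$ is an arbitrary (possibly noncommutative, possibly non-semisimple) ring, since here $R_i$ may be a skew group algebra rather than a field or even a semisimple ring. This is classical (it is the statement cited via \cite{Morita1958,Ferraz2021ideals}), so I would simply cite it; if a self-contained argument is wanted, one checks that $\calI_l(L)$ is closed under left multiplication by elementary matrices $\lambda E_{jk}$ (which permutes/scales rows within $L$) hence is a left ideal, that distinct $L$ give distinct $\calI_l(L)$, and that an arbitrary left ideal $J$ equals $\calI_l(L_J)$ for $L_J$ the row-set of $J$ — the nontrivial inclusion being that if every row of a matrix lies in $L_J$ then the matrix lies in $J$, which follows by writing the matrix as a sum $\sum_j E_{j1}\cdot(\text{matrix whose first row is the }j\text{th row})$ and using that $L_J$ is exactly the first-row set, closed under the $R$-action. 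Since nothing in this argument uses commutativity or semisimplicity of $R$, it applies verbatim to $R_i = \FF_q[\alpha_i]\ast_{\theta_i}B_i$. The rest is bookkeeping already done in \Cref{theor:decomp} and \Cref{rem:decomp}.
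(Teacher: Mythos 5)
Your proposal is correct and follows the same route as the paper, which presents the theorem as an immediate consequence of \Cref{theor:decomp}, the fact that left ideals of a direct sum of algebras split as direct sums of left ideals of the summands, and the cited correspondence $L \leftrightarrow \calI_{s_i}(L)$ between left $R$-submodules of $R^{s_i}$ and left ideals of $\Mat_{s_i}(R)$. Your extra verification that this correspondence holds for an arbitrary ring $R$ (via the $E_{j1}$-decomposition argument) is a worthwhile addition, since the paper only cites it, but it is not a departure from the paper's approach.
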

It is worth mentioning that $L_i$ are also known as
\begin{itemize}
    \item \emph{cyclic codes} if $s_i = 1$ (and hence $u_i = m$ and $B_i = B$) and $\alpha_i=\alpha_i^{r}$ since in that case they are simply ideals of cyclic group algebras $\FF_q[\alpha_i] B$ ;
    \item \emph{skew cyclic codes} if $s_i = 1$ and $\alpha_i \neq \alpha_i^r$, with $L_i$ being left ideals of skew cyclic group algebras $\FF_q[\alpha_i] \ast_{\theta_i} B$; 
    \item \emph{linear codes over $\FF_q[\alpha_i]$} if $s_i = m$ (and hence $u_i = 1$ and $B_i = \{ e \}$), with each $L_i$ being a linear subspace of $\left( \FF_q[\alpha_i] \right)^m$;
    \item \emph{quasi-cyclic codes} if $1 < s_i < m$ and $\alpha_i \neq \alpha_i^{r^{s_i}}$, with $L_i$ being a $\FF_q[\alpha_i]B$-submodules of $\left( \FF_q[\alpha_i]B_i \right)^{s_i}$;
    \item \emph{skew quasi-cyclic codes} if $1 < s_i < m$ and $\alpha_i \neq \alpha_i^{r^{s_i}}$, with $L_i$ being a $\FF_q[\alpha_i] \ast_{\theta_i} B$-submodules of $\left( \FF_q[\alpha_i] \ast_{\theta_i} B_i \right)^{s_i}$
\end{itemize}
(see \Cref{rem:decomp}). Given that the algebraic description and properties of cyclic, quasi-cyclic, and skew cyclic codes are well-studied (see e.g. \cite{Ding2021, Guneri2021, Gluesing2021}), the characterization of $L_i$ in first four cases can be readily derived. Skew quasi-cyclic codes are studied in much less depth, however, their algebraic description can be derived via leveraging the decomposition of skew group algebras into direct sum of matrix algebras (see the previous section) and the one-to-one correspondence between left ideals of matrix algebras and left submodules.

Let $R$ be a ring. Given $x = (x^{(1)}, \dots, x^{(l)}) \in R^l$, by $\supp_{R^l}(x) = \{ j \in \intrange{1}{l} \mid x^{(j)} \neq 0 \}$ we denote the support of $x$. For a $R$-submodule $L$ of $R^l$, let
\[ \Supp_{R^l}(L) = \bigcup_{x \in L} \supp_{R^l}(x) \]
denote the union of supports of all $x \in L$. 

Let $\KK$ be a field, $G$ be a finite group, and let $G = \{ g_1, \dots, g_{|G|} \}$ be an enumeration of its elements. This enumeration induces a $\KK$-linear isomorphism $\operatorname{coord}_{\KK \ast_{\theta} G}: \KK \ast G \to \KK^{|G|}$ defined with respect to the enumeration by
\[ 
    \operatorname{coord}_{\KK \ast_{\theta} G}: \; 
        \sum_{g \in G} g \lambda_g
    \mapsto \left( \lambda_{g_1}, \dots, \lambda_{g_{|G|}} \right),
\]
where $\KK \ast G$ is a skew group algebra. Similarly, it is also possible to define a $\KK$-linear isomorphism $\operatorname{coord}_{\left( \KK \ast G \right)^l}: \left( \KK \ast G \right)^l \to \KK^{l \cdot |G|}$ by
\[
    \operatorname{coord}_{\left( \KK \ast G \right)^l}: \;
    \left( x^{(1)}, \dots, x^{(l)} \right) \mapsto 
    \left( 
        \operatorname{coord}_{\KK \ast_{\theta} G}(x^{(1)}) \; | \; \dots \; | \; \operatorname{coord}_{\KK \ast_{\theta} G}(x^{(l)}) 
    \right)
\]

Now, we are ready to establish a lower bound on the minimum distance of metacyclic codes.
\begin{theorem} \label{theor:mindist}
    Let $C \subset \FF_qG_{n,m,r}$ be a left metacyclic codes given by \eqref{eq:code_decomp}. Let $I = \left\{ i \in \intrange{1}{\omega} \mid L_i \neq \{ 0 \} \right\}$. For each $i \in I$, let
    \begin{itemize}
        \item $K_i = \Supp_{R_i^{s_i}}(L_i)$;
        \item $V_i \subset \FF_q A$ be a length-$n$ cyclic code defined by the following generator polynomial
        \[ 
            g(x) = (x^n-1) / \left( \prod_{j \in K_i} f_i^{(r^{j-1})}(x) \right), 
        \]
        i.e., $V_i = (\FF_q A)g(a)$;
        \item $d_i = d\left( \operatorname{coord}_{R_i^{s_i}}(L_i) \right)$ be the minimum distance of $L_i$ considered as a $\FF_q[\alpha_i]$-linear code.
    \end{itemize}
    Suppose that elements of $I = \{ i_1, \dots, i_{|I|} \}$ are enumerated such that
    $d_{i_1} \leq d_{i_2} \leq \dots \leq d_{i_{|I|}}$, then
    \begin{equation} \label{eq:min_dist}
        d(C) \geq \min_{1 \leq j \leq |I|} \left\{
            d_{i_j} \cdot d(V_{i_1} \dotplus \dots \dotplus V_{i_j})
        \right\}.
    \end{equation}    
\end{theorem}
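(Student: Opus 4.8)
The plan is to combine the generalized concatenated structure of metacyclic codes (implicit in the decomposition of Theorem~\ref{theor:decomp}) with the standard minimum-distance bound for generalized concatenated codes. Concretely, I would first fix a nonzero codeword $c \in C$ and look at its image $\tau(c) = \bigoplus_{i} \calI_{s_i}(x_i)$ under~\eqref{eq:decomp}, where $x_i \in L_i \subset R_i^{s_i}$; let $J = \{\, i \in I \mid x_i \neq 0 \,\}$, which is nonempty since $c \neq 0$. The key observation is that the weight of $c$ as an element of $\FF_q G_{n,m,r}$ can be estimated from below by separately accounting for: (a) how many "coordinate positions" (indexed by the orbit structure, i.e.\ by the factors $f_i^{(r^{j-1})}$ of $x^n-1$) are touched, and (b) the minimum weight of each nonzero block $x_i$ viewed as an $\FF_q[\alpha_i]$-linear code. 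This is exactly the two-level weight counting that underlies the GCC bound $d \ge \min_j d_{i_j} \cdot d(\text{outer code at level } j)$.

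The main technical step is to make precise the relationship between the Hamming weight of $c$ in $\FF_q G_{n,m,r}$ and the quantities $d_i$ and $d(V_{i_1}\dotplus\cdots\dotplus V_{i_j})$. I would proceed as follows. Writing $c = \sum_{j=0}^{m-1} P_j(a) b^j$, the weight $\wt{c}$ equals the total number of nonzero coefficients among the polynomials $P_0,\dots,P_{m-1}$, i.e.\ $\wt{c} = \sum_{j=0}^{m-1} \wt{P_j}$ where $\wt{P_j}$ is the Hamming weight of the coefficient vector of $P_j \in \FF_q[x]_{<n}$. Now consider the projection onto a single summand $\calA_i$ with $i \in J$: the entries of the matrix $\tau_i(c)$ are (up to the bookkeeping with $h_i$) evaluations $P_{j + z s_i}(\alpha_i^{r^l})$, so the nonvanishing of $x_i = \tau_i(c)$ forces certain $P_j$ to be nonzero, and moreover the number of nonzero rows of $x_i$ (which is at least $d_i$ in the $\FF_q[\alpha_i]$-metric when one of the rows is counted with its own weight) controls how many distinct cyclic-code positions are forced. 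The precise mechanism: the positions $l$ for which the residue $P_j(x) \bmod f_i^{(r^l)}(x)$ is nonzero are constrained to lie in $K_i = \Supp_{R_i^{s_i}}(L_i)$, and the cyclic code $V_i$ is exactly the code of all polynomials vanishing on the complementary set of roots; hence $P_j$, as a polynomial of degree $<n$ that is nonzero modulo at least the roots indexed by $\supp(x_i)$, must have weight at least $d(V_i)$ — and, summing over the levels $j$ of the GCC, weight at least $d(V_{i_1} \dotplus \cdots \dotplus V_{i_j})$ where $j = |J|$ after reordering. Combining with the fact that at least $d_i$ of the $P_\bullet$'s (per level, in the outer metric) are forced nonzero gives $\wt{c} \ge d_{i_j}\cdot d(V_{i_1}\dotplus\cdots\dotplus V_{i_j})$ for $j=|J|$, and taking the minimum over all possible $|J|$ yields~\eqref{eq:min_dist}.

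More carefully, I expect the cleanest route is: (1) show $K_i = \Supp_{R_i^{s_i}}(L_i)$ determines which factors $f_i^{(r^{l})}$ can appear in the "support of $\tau_i(c)$" for $c \in C$; (2) observe that for $c \in C$ with $\tau_i(c) = x_i$, each polynomial $P_j$ occurring in $c$ is either $\equiv 0 \bmod \prod_{l \in K_i} f_i^{(r^l)}$ or else is a nonzero element of the cyclic code $V_i$ (after identifying it inside $\FF_q A$), hence $\wt{P_j} \ge d(V_i)$ whenever it contributes nontrivially to level $i$; (3) use the generalized-concatenation argument — order the active indices so that $d_{i_1}\le\cdots\le d_{i_{|J|}}$, note that at the $t$-th level the relevant $P_j$'s lie in $V_{i_1}\dotplus\cdots\dotplus V_{i_t}$ because lower-level components have already been "used up," and that there are at least $d_{i_t}$ nonzero contributions at that level; (4) sum and minimize. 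The main obstacle is step~(2)–(3): unwinding the skew-group-algebra bookkeeping (the $h_i$'s, the action $\theta_i$, and the fact that distinct rows of $x_i$ correspond to distinct $P_j$'s shifted by multiples of $s_i$) to see cleanly that the nonzero rows of $x_i$, counted in the $\FF_q[\alpha_i]$-linear metric, give genuinely independent lower bounds that multiply rather than merely add. I would handle this by reducing to the coordinate picture via $\operatorname{coord}_{R_i^{s_i}}$ and the explicit form of $\tau_i(c)$ displayed in the proof of Theorem~\ref{theor:decomp}, where each matrix entry is an evaluation of some $P_j$, so that a nonzero row of weight $\ge d_i$ forces $d_i$ of the entries to be nonzero, each of which forces the corresponding $P_j$ to be a nonzero codeword of the appropriate cyclic code.
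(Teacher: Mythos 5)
Your proposal follows essentially the same route as the paper's proof: project a nonzero codeword onto the summands $\calA_i$, use the explicit matrix form of $\tau_i$ to see that a nonzero component forces at least $d_i$ distinct polynomials $P_k$ to be nonzero (distinct nonzero $\FF_q[\alpha_i]$-coordinates of a row correspond to distinct $P_k$'s), observe that each such $P_k$ lies in a sum of the cyclic codes $V_{i_l}$, and multiply. One correction to your bookkeeping: the level at which you apply the bound must be the \emph{largest} active position $\tilde{j}$ in the $d$-sorted enumeration of all of $I$ (so that $d_{i_{\tilde{j}}}$ is the largest active $d_i$ and $\dotplus_{l\in J}V_{i_l}\subseteq V_{i_1}\dotplus\cdots\dotplus V_{i_{\tilde{j}}}$), not $j=|J|$: when the active set $J$ is not an initial segment, the intermediate inequality $\wt{c}\ge d_{i_{|J|}}\cdot d\big(V_{i_1}\dotplus\cdots\dotplus V_{i_{|J|}}\big)$ you state can fail, even though the correct one with $\tilde{j}$ still implies \eqref{eq:min_dist}. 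Relatedly, there is no ``sum over levels'' in step (4) --- only the single top active level is used and then one minimizes over its possible position. These are local fixes; the rest of your argument matches the paper's.
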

\begin{proof}
    Let $P = \sum_{k=0}^{m-1} b^kP_k(a)$,  where $P_j(x) \in \FF_q[x]_{< n}$, be a non-zero codeword of $C$. It follows that there exists at least one $j \in \intrange{1}{|I|}$ such that $\tau_{i_j}(P) \neq 0$. Let $\tilde{j}$ be the largest such $j$.

    For $i = i_{\tilde{j}}$, we have $\tau_i(P) = \sum_{k=0}^{m-1} b^k P_k(\alpha_i)$ if $s_i = 1$ and $\tau_i(P) =$
    \begin{equation} \label{eq:matrix_tau}
            = \sum_{z=0}^{u_i} h_i^z
            \left[ \begin{array}{c|c|c|c|c}
                P_{zs_i}(\alpha_i) & 
                P_{1 + zs_i}(\alpha_i^r) &
                P_{2 + zs_i}(\alpha_i^{r^2}) & 
                \cdots &
                P_{s_i-1 + zs_i}(\alpha_i^{r^{s_i-1}}) 
                \\ 
                h_i P_{s_i-1 + zs_i}(\alpha_i) & 
                P_{zs_i}(\alpha_i^r) &
                P_{1 + zs_i}(\alpha_i^{r^2}) & 
                \cdots &
                P_{s_i-2 + zs_i}(\alpha_i^{r^{s_i-1}})
                \\ 
                h_i P_{s_i-2 + zs_i}(\alpha_i) & 
                h_i P_{s_i-1 + zs_i}(\alpha_i^{r}) &
                P_{zs_i}(\alpha_i^{r^2}) & 
                \cdots &
                P_{s_i-1 + zs_i}(\alpha_i^{r^{s_i-1}})
                \\ \hline
                \vdots & \vdots & \vdots & \ddots & \vdots 
                \\ \hline
                h_i P_{1 + zs_i}(\alpha_i) & 
                h_i P_{2 + zs_i}(\alpha_i^{r}) &
                h_i P_{3 + zs_i}(\alpha_i^{r^2}) & 
                \cdots &
                P_{zs_i}(\alpha_i^{r^{s_i-1}})
            \end{array} \right], 
    \end{equation}
    otherwise. Given the definition of $\calI_{s_i}(L_i)$, $\tau_i(P) \neq 0$ implies that there exist at least $d_i$ indices $k_1, k_2, \dots, k_{d_i}$ such that $P_{k_1}(a), \dots, P_{k_{d_i}}(a)$ are non-zero. Moreover, one can easily note that 
    \[ 
        P_{k_1}(a), \dots, P_{k_{d_i}}(a) \in V_{i_1} \dotplus \dots \dotplus V_{i_{\tilde{j}}}
    \] 
    due to the definitions of $\tilde{j}$, $K_i$, and $V_i$. Hence using $\wt{P} = \sum_{k=0}^{m-1} \wt{P_k}$, we obtain 
    \[ 
        \wt{P} \geq d_{i_{\tilde{j}}} \cdot d(V_{i_1} \dotplus \dots \dotplus V_{i_{\tilde{j}}}),
    \]
    which implies \eqref{eq:min_dist}.    
\end{proof}

\paragraph{Concatenated Structure of Metacyclic Codes}
In the following, we show that metacyclic codes can also be viewed as generalized concatenated (GC) codes (\cite{Blokh1974, Zyablov1999}). 

Generalized concatenated codes are an effective construction for building long codes from shorter ones.  Given:
\begin{itemize}
    \item $l$ outer $\FF_{q^{m_i}}$-linear codes $C_1 \subset \FF_{q^{m_1}}, \dots, C_l \subset \FF_{q^{m_l}}$ of length $m$;
    \item $n$ inner $\FF_q$-linear codes $Z_1, \dots, Z_n$ of length $n$ and dimension $\sum_{i=1}^{l} m_i$;
    \item $n$ $\FF_q$-linear encoding maps $\psi_j: \bigoplus_{i=1}^l \FF_{q^{m_i}} \to Z_j \subset \FF_q^n,$
\end{itemize}
the \emph{generalized concatenated code} $(C_1, \dots, C_l) \Box (Z_1, \dots, Z_n)$ is defined by
\[
  \left\{ 
    \begin{bmatrix}
        \text{---} & \psi_1(c_{1,1}, c_{1,2}, \dots, c_{1,l}) & \text{---} \\
        \text{---} & \psi_2(c_{2,1}, c_{2,2}, \dots, c_{2,l}) & \text{---} \\
        & \vdots & \\
        \text{---} & \psi_n(c_{m,1}, c_{m,2}, \dots, c_{m,l}) & \text{---}
    \end{bmatrix} \in \FF_q^{m \times n} \; \Bigg| \; \left( c_{1,i}, c_{2,i}, \dots, c_{m,i} \right) \in C_i 
  \right\}.  
\]
Simply put, the encoding of GC codes can be performed in two steps: first, we construct a matrix whose columns are codewords of outer codes $C_1, \dots, C_l$, and then encode each row of this matrix using inner codes. Note that often the inner codes (as well as corresponding encoding maps) are usually chosen to coincide with each other, however, generally this is not required. 

In this section, we will consider a slightly different presentation of this construction by assuming the outer codes are $G$-codes and the codewords of the resulting GC codes are elements of $(\FF_q G)^m$ instead of $(m \times n)$-matrices.

Consider a metacyclic code $C$. Below, we rely on the notations of \Cref{theor:mindist} and its proof. One can easily note that by performing the following steps to matrix \eqref{eq:matrix_tau}:
\begin{enumerate}[(i)]
    \item apply $\operatorname{coord}_{R_i^{s_i}}$ to each row and transpose the resulting matrix;
    \item rearrange items in each column to obtain the matrix having values of $P_k(x)$ in its $k$-th row for all $k \in \intrange{1}{m}$,
\end{enumerate}
we obtain the matrix, each column of which is a codeword of a code permutationally equivalent to $\operatorname{coord}_{R_i^{s_i}}(L_i)$, with each $k$-th row being the evaluation vector of the polynomial $P_k(x)$ at points $\{ \alpha_i, \alpha_i^{r}, \dots, \alpha_i^{r^{s_i-1}} \}$. For example, 
\[
    \begin{bmatrix}
        P_0(\alpha_i) + h_i P_2(\alpha_i) & P_1(\alpha_i^r) + h P_3(\alpha_i^r) \\
        P_3(\alpha_i) + h_i P_1(\alpha_i) & P_0(\alpha_i^r) + h_i P_2(\alpha_i^r)
    \end{bmatrix} \mapsto
    \begin{bmatrix}
        P_0(\alpha_i) & P_0(\alpha_i^r) \\
        P_1(\alpha_i^r) & P_1(\alpha_i) \\
        P_2(\alpha_i) & P_2(\alpha_i^r) \\
        P_3(\alpha_i^r) & P_3(\alpha_i)
    \end{bmatrix},
\]
\[
    \begin{bmatrix}
        P_0(\alpha_i) & P_1(\alpha_i^r) & P_2(\alpha_i^{r^2}) \\
        P_2(\alpha_i) & P_0(\alpha_i^r) & P_1(\alpha_i^{r^2}) \\
        P_1(\alpha_i) & P_2(\alpha_i^r) & P_0(\alpha_i^{r^2})
    \end{bmatrix} \mapsto
    \begin{bmatrix}
        P_0(\alpha_i) & P_0(\alpha_i^r) & P_0(\alpha_i^{r^2}) \\
        P_1(\alpha_i^r) & P_1(\alpha_i^{r^2}) & P_1(\alpha_i) \\
        P_2(\alpha_i^{r^2}) & P_2(\alpha_i) & P_2(\alpha_i^r)
    \end{bmatrix}.
\]

For different $i \in I$, the resulting matrices after steps (i) and (ii) can be concatenated side by side. It follows that the encoding of metacyclic codes can be performed in the same two steps as encoding of GC codes: 
\begin{itemize}
    \item first, we obtain a matrix consisting of codewords of some codes permutationally equivalent to $\operatorname{coord}_{R_i^{s_i}}(L_i)$, $i \in I$;
    \item second, we recover each $P_k(x)$ from its evaluations (in this step we obtain $P_k(a)$ as codewords of $V_{i_1} \dotplus \dots \dotplus V_{i_{|I|}}$).
\end{itemize}
Therefore, metacyclic codes can be indeed viewed as generalized concatenated codes, with outer codes being skew quasi-cyclic codes (in the most general case), and inner codes being the cyclic code $V = V_{i_{1}} \dotplus \dots \dotplus V_{i_{|I|}}$ of length $n$. In fact, the distance bound obtained in \Cref{theor:mindist} coincides with the minimum distance bound of metacyclic codes viewed as GC codes. Additionally, the GC structure of metacyclic codes also allows for using decoding methods for GC codes for decoding metacyclic codes. 

\begin{remark}
In \cite{Puchinger2017}, S.~Puchinger, S.~M\"uelich, K.~Ishak, and M.~Bossert described an attack that, under certain conditions, allows for partially recovering the secret permutation of McEliece-type cryptosystems based on generalized concatenated (GC) codes. It was shown in \cite{Puchinger2017} that this enables a significant reduction in the complexity of message-recovery attacks. Consequently, the generalized concatenated structure of metacyclic codes implies that many  instances of cryptosystems based on these codes can be effectively broken using the PMIB-attack.

Note that a sufficient condition for this attack to work is the existence of a large number of codewords in $C^{\perp}$ of weight less than $\min_{i \in I}\left\{ d\big( \operatorname{coord}_{R_i^{s_i}}(L_i)^\perp \big), 2d\left( V^{\perp} \right) \right\}$.
\end{remark}

\section{Induced codes}
\label{sec:induced}
Let \(G\) be a group and \(H\) be its subgroup of index \(|G:H|\). Given a left \(H\)-code \(C \subset \FF_qH\), it is possible to obtain the following code
\[ 
    C^G = (\FF_qG) \otimes_{\FF_qH} C = (\FF_qG) \cdot C,
\] 
referred to as the \emph{\(G\)-code induced by an \(H\)-code \(C\)} (or simply an \emph{induced code}) \cite{Zimmermann1994}. 
Let \(T_L(G, H) = \{ g_1, \dots, g_{|G:H|} \}\) be a left transversal of \(H\), i.e., 
\[ G = \bigsqcup_{g \in T_L(G, H)} gH. \]
Since any element of \(\FF_qG\) can be uniquely represented as \(\sum_{g \in T_L(G,H)} g u_g\), where \(u_g \in \FF_qH\), it follows that 
\[ \FF_qG = g_1(\FF_qH) \dotplus g_2(\FF_qH) \dotplus \dots \dotplus g_{|G:H|}(\FF_qH), \]
and hence
\begin{equation} \label{d:eq:CG}
    C^G = g_1C \dotplus g_2 C \dotplus \dots \dotplus g_{|G:H|} C. 
\end{equation} 
Therefore, \(C^G\) can be considered as the repeated \(|G:H|\) times code \(C\), with each \(g_i C\) protecting symbols of \(\FF_qG\) indexed by the coset \(g_iH\). In particular, that means that if \(C\) is a \([n, k, d]\)-code, then \(C^G\) is a \( \left[n |G:H|, k |G:H|, d \right]\)-code, and if \(\boldB(C)\) is a basis of \(C\), then 
\[ T(G,H) \cdot \boldB(C)= \left\{ g_i \cdot \mathbf{b} \mid g_i \in T_L(G,H), \; \mathbf{b} \in \boldB(C)  \right\} \]
is a basis of \(C^G\) (see also \cite{Deundyak2016}).

Since it is often easier to study codes from subgroups (e.g., cyclic codes) than from the group itself, induced codes could be a useful tool for studying the properties of group codes. 

\begin{definition} \label{d:def:exterior}
    Let $G$ be a group and $H$ be its subgroup. Given a $G$-code $C \subset \FF_qG$, the intersection $\Ext_H(C)$ of all $G$-codes induced by $H$-codes and containing $C$ is called the \emph{exterior induced $H$-code of $C$}. In other words, $\Ext_H(C)$ is the smallest induced code containing $C$.
\end{definition}

\begin{proposition} \label{d:prop:exterior}
    Let the projection $\operatorname{pr}_{H}: \FF_qG \to \FF_qH$ be given by
    \[ 
        \operatorname{pr}_{H}\left( \sum_{g \in G} \lambda_g g \right) = \sum_{g \in H} \lambda_g g,
    \]
    and let $C \subset \FF_qG$ be a $G$-code. Then $\operatorname{pr}_{H}(C)$ is a $H$-code and $\Ext_{H}(C) = \left( \operatorname{pr}_{H}(C) \right)^G$.     
\end{proposition}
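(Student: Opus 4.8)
\emph{Proof plan.} The plan is to dispatch the two claims separately: first a one-line verification that $\operatorname{pr}_H(C)$ is a left ideal of $\FF_qH$, then a proof that $(\operatorname{pr}_H(C))^G$ both contains $C$ and is contained in every induced code containing $C$.

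\emph{Step 1 (the projection is an $H$-code).} The elementary fact underlying everything is that $\operatorname{pr}_H$ commutes with left multiplication by elements of $H$: for $h \in H$ and $x = \sum_{g \in G}\lambda_g g$ one has $\operatorname{pr}_H(hx) = h\operatorname{pr}_H(x)$, because $hg \in H$ if and only if $g \in H$. Since $\operatorname{pr}_H$ is $\FF_q$-linear and $C$ is a left ideal of $\FF_qG$, for $h \in H$ and $c \in C$ we get $h\operatorname{pr}_H(c) = \operatorname{pr}_H(hc) \in \operatorname{pr}_H(C)$; hence $\operatorname{pr}_H(C)$ is a left ideal of $\FF_qH$, i.e. an $H$-code, and so $(\operatorname{pr}_H(C))^G$ is a well-defined induced code.

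\emph{Step 2 ($C \subseteq (\operatorname{pr}_H(C))^G$).} Fix a left transversal $T_L(G,H) = \{g_1,\dots,g_{|G:H|}\}$ with $g_1 = e$. Each $c \in C$ has a unique expression $c = \sum_i g_i u_i$ with $u_i \in \FF_qH$, and I claim $u_i = \operatorname{pr}_H(g_i^{-1}c)$: indeed $g_i^{-1}c = \sum_j (g_i^{-1}g_j)u_j$, where $g_i^{-1}g_j \in H$ iff $g_j \in g_iH$ iff $j=i$, so every term with $j \neq i$ has support disjoint from $H$ while the $j=i$ term is $u_i \in \FF_qH$. Since $C$ is a left ideal, $g_i^{-1}c \in C$, whence $u_i \in \operatorname{pr}_H(C)$ and thus $c \in g_1\operatorname{pr}_H(C)\dotplus\dots\dotplus g_{|G:H|}\operatorname{pr}_H(C) = (\operatorname{pr}_H(C))^G$ by \eqref{d:eq:CG}.

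\emph{Step 3 (minimality).} Let $D \subset \FF_qH$ be any $H$-code with $C \subseteq D^G$. Using the same transversal and \eqref{d:eq:CG}, $D^G = D \dotplus g_2D \dotplus\dots\dotplus g_{|G:H|}D$, and since $g_i \notin H$ for $i \geq 2$ the summand $g_iD$ has support disjoint from $H$; therefore $\operatorname{pr}_H(D^G) = D$. Applying $\operatorname{pr}_H$ to $C \subseteq D^G$ yields $\operatorname{pr}_H(C) \subseteq D$, and since induction is monotone — $D_1 \subseteq D_2$ implies $D_1^G = (\FF_qG)D_1 \subseteq (\FF_qG)D_2 = D_2^G$ — we get $(\operatorname{pr}_H(C))^G \subseteq D^G$. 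Combined with Step 2, $(\operatorname{pr}_H(C))^G$ is an induced code containing $C$ that lies inside every induced code containing $C$, so by \Cref{d:def:exterior} it equals $\Ext_H(C)$.

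\emph{Main obstacle.} All the computations are routine coset bookkeeping; the only point needing care is the combination ``$\operatorname{pr}_H(D^G) = D$ plus monotonicity of induction'' in Step 3, which is what actually forces minimality, and the mild convention (choosing the transversal so that $g_1 = e$, i.e. the representative of the trivial coset lies in $H$) that makes $\operatorname{pr}_H$ literally the projection onto the corresponding summand of $\FF_qG = \bigoplus_i g_i\FF_qH$.
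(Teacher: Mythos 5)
Your proposal is correct and follows essentially the same route as the paper: establish $\operatorname{pr}_H(hu)=h\operatorname{pr}_H(u)$ to get the $H$-code claim, use $u_g=\operatorname{pr}_H(g^{-1}u)$ to show $C\subseteq(\operatorname{pr}_H(C))^G$, and use $\operatorname{pr}_H(D^G)=D$ plus monotonicity of induction for minimality. Your phrasing of the final step (concluding that $(\operatorname{pr}_H(C))^G$ lies in every induced code containing $C$, hence in their intersection) is in fact slightly cleaner than the paper's, which tacitly treats $\Ext_H(C)$ as itself an induced code.
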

\begin{proof}
    One can easily note that $\operatorname{pr}_{H}$ is a surjective $\FF_q$-linear map such that
    \[ 
        \operatorname{pr}_{H}(h u) = h  \operatorname{pr}_{H}(u).
    \]
    for any $h \in H$ and $u \in \FF_qG$. It follows that the image of any left ideal of $\FF_qG$ under $\operatorname{pr}_{H}$ is a left ideal of $\FF_qH$, so $\operatorname{pr}_{H}(C)$ is indeed an $H$-code.

    Recall that any element $u \in \FF_qG$ can be uniquely represented as $\sum_{g \in T_L(G, H)} g u_g$, where $u_g \in \FF_qH$. Since $C$ is a left ideal, it follows that for any $u \in C$ we have
    \[  
        \big\{ u_g = \operatorname{pr}_H(\underbrace{g^{-1} u }_{\in C}) \mid g \in T_L(G, H) \big\} \subset \operatorname{pr}_{H}(C).
    \]
    Therefore, by \eqref{d:eq:CG}, $u \in \left( \operatorname{pr}_{H}(C) \right)^G$, and hence $C \subset \left( \operatorname{pr}_{H}(C) \right)^G$. Consequentially, $\Ext_H(G) \subset \left( \operatorname{pr}_{H}(C) \right)^G$.

    Now, let $I$ be an arbitrary $H$-code such that $C \subset I^G$. Using \eqref{d:eq:CG}, we infer $\operatorname{pr}_{H}(I^G) = I$, and hence 
    \[
        C \subset I^G \implies \operatorname{pr}_{H}(C) \subset \underbrace{\operatorname{pr}_{H}(I^G)}_{ = I} \implies 
        \left( \operatorname{pr}_{H}(C) \right)^G \subset I^G.
    \]
    By choosing $I^G$ to be $\Ext_H(G)$, we obtain
   $\left( \operatorname{pr}_{H}(C) \right)^G \subset \Ext_H(G)$.
\end{proof}

\begin{corollary}  \label{d:cor:6}
    By \Cref{d:prop:exterior} and \eqref{d:eq:CG}, for any codeword $u$ of a $G$-code $C$ we have
    \begin{equation} \label{d:eq:cor5}
        u = \sum_{g \in T_L(G, H)} g u_g, \quad u_g \in \operatorname{pr}_{H}(C), 
    \end{equation}         
    and hence $d(C) \geq d(\operatorname{pr}_H(C))$.
\end{corollary}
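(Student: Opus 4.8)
The statement to prove is Corollary~\ref{d:cor:6}, which asserts two things: that every codeword $u$ of a $G$-code $C$ admits a decomposition $u = \sum_{g \in T_L(G,H)} g u_g$ with each $u_g \in \operatorname{pr}_H(C)$, and that consequently $d(C) \geq d(\operatorname{pr}_H(C))$.

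The plan is as follows. The first assertion is essentially a restatement of what was already established inside the proof of Proposition~\ref{d:prop:exterior}: since $C$ is a left ideal, for any $u \in C$ and any $g \in T_L(G,H)$ the element $g^{-1}u$ again lies in $C$, and $u_g = \operatorname{pr}_H(g^{-1}u) \in \operatorname{pr}_H(C)$; combining with the unique transversal decomposition $u = \sum_{g \in T_L(G,H)} g u_g$ gives \eqref{d:eq:cor5}. So the first part needs only a sentence of recollection.

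For the distance bound, I would argue as follows. Fix a nonzero codeword $u \in C$ and write $u = \sum_{g \in T_L(G,H)} g u_g$ as in \eqref{d:eq:cor5}. Because the cosets $\{gH : g \in T_L(G,H)\}$ partition $G$, the supports of the various $g u_g$ lie in pairwise disjoint subsets of $G$, and left multiplication by $g$ is a weight-preserving bijection on $\FF_q H$; hence $\wt{u} = \sum_{g \in T_L(G,H)} \wt{u_g}$. Since $u \neq 0$, at least one $u_g$ is nonzero, and for that $g$ we have $\wt{u} \geq \wt{u_g} \geq d(\operatorname{pr}_H(C))$, using that $u_g$ is a nonzero element of the $H$-code $\operatorname{pr}_H(C)$. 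Taking the minimum over all nonzero $u \in C$ yields $d(C) \geq d(\operatorname{pr}_H(C))$. One should also note the edge case where $\operatorname{pr}_H(C) = \{0\}$, which forces $C = \{0\}$ by \eqref{d:eq:cor5}, so the bound holds vacuously.

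There is no real obstacle here; the corollary is a direct packaging of the transversal decomposition with the additivity of Hamming weight over disjoint coset supports. The only point requiring a modicum of care is the weight additivity: one must observe that $\supp(g u_g) \subseteq gH$ and that these cosets are disjoint, so no cancellation occurs across different transversal representatives. I would state this cleanly and then invoke it. Everything else follows immediately from Proposition~\ref{d:prop:exterior} and \eqref{d:eq:CG}.
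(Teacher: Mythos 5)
Your proposal is correct and matches the paper's intended argument: the paper gives no separate proof, treating the corollary as immediate from the decomposition $u_g = \operatorname{pr}_H(g^{-1}u)$ established in the proof of \Cref{d:prop:exterior} together with the disjointness of the coset supports, which is exactly what you spell out. Your explicit note on weight additivity across disjoint cosets and the degenerate case $\operatorname{pr}_H(C)=\{0\}$ merely makes the paper's implicit reasoning precise.
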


\begin{definition} \label{d:def:interior}
    Let $G$ be a group, and $H$ be its subgroup. Given a $G$-code $C \subset \FF_qG$, the sum $\operatorname{Int}_H(C)$ of all $G$-codes contained in $C$ and induced by $H$-codes is called the \emph{interior induced $H$-code of $C$}. In other words, $\operatorname{Int}_H(C)$ is the largest induced subcode of $C$.
\end{definition}

\begin{proposition} \label{d:prop:interior}
    Let $C \subset \FF_qG$ be a $G$-code and let $C|_H = \operatorname{pr}_H(C \cap \FF_qH)$. Then $C|_H$ is a $H$-code and $\operatorname{Int}_H(C) = \left( C|_H \right)^G$. Furthermore, $d(C|_{\mathcal{H}}) > d(C)$.
\end{proposition}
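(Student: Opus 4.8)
The plan is to mirror the structure of the proof of Proposition \ref{d:prop:exterior}, but "from below" instead of "from above." First I would check that $C|_H = \operatorname{pr}_H(C \cap \FF_q H)$ is an $H$-code: the set $C \cap \FF_q H$ is a left ideal of $\FF_q H$ (it is the intersection of the left ideal $C$ of $\FF_q G$ with the subalgebra $\FF_q H$, and it is stable under left multiplication by elements of $\FF_q H$ since $\FF_q H \cdot (C \cap \FF_q H) \subset C \cap \FF_q H$), and on $\FF_q H$ the map $\operatorname{pr}_H$ is the identity, so $C|_H = C \cap \FF_q H$ is already an $H$-code. (This makes the $\operatorname{pr}_H$ in the statement cosmetic, but harmless.)

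Next I would establish the two inclusions for $\operatorname{Int}_H(C) = (C|_H)^G$. For $(C|_H)^G \subset C$: since $C|_H = C \cap \FF_q H \subset C$ and $C$ is a left ideal of $\FF_q G$, we get $(\FF_q G)\cdot C|_H \subset (\FF_q G)\cdot C = C$, i.e. $(C|_H)^G \subset C$; being an induced subcode of $C$, it is contained in the largest one, so $(C|_H)^G \subset \operatorname{Int}_H(C)$. For the reverse inclusion, let $I \subset \FF_q H$ be any $H$-code with $I^G \subset C$. Applying $\operatorname{pr}_H$ and using $\operatorname{pr}_H(I^G) = I$ (from \eqref{d:eq:CG}, exactly as in the proof of Proposition \ref{d:prop:exterior}) together with $\operatorname{pr}_H(I^G) \subset \operatorname{pr}_H(C)$ — but more precisely, since $I^G \subset C$ and $I^G \cap \FF_q H = I$ (again by \eqref{d:eq:CG}), we get $I = I^G \cap \FF_q H \subset C \cap \FF_q H = C|_H$, hence $I^G \subset (C|_H)^G$. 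Taking the sum over all such $I$ gives $\operatorname{Int}_H(C) \subset (C|_H)^G$. Combining, $\operatorname{Int}_H(C) = (C|_H)^G$.

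Finally, for the distance claim $d(C|_H) \geq d(C)$ (the strict "$>$" in the statement should presumably be "$\geq$"): since $(C|_H)^G \subset C$, and by the remark following \eqref{d:eq:CG} an induced code $(C|_H)^G$ has the same minimum distance as $C|_H$ (it is $C|_H$ repeated $|G:H|$ times), we get $d(C|_H) = d\big((C|_H)^G\big) \geq d(C)$, the last inequality because $(C|_H)^G$ is a subcode of $C$ and the minimum distance of a subcode is at least that of the ambient code. One should note that if $C|_H = \{0\}$ — which happens exactly when $C$ contains no induced subcode other than zero — then $d(C|_H) = \infty$ by convention and the inequality holds trivially.

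The only mild obstacle is bookkeeping around $\operatorname{pr}_H$: one must be slightly careful that $C \cap \FF_q H$ is genuinely a left $\FF_q H$-ideal (not merely an $\FF_q$-subspace) before invoking the induced-code machinery, and one must use the identity $I^G \cap \FF_q H = I$, which follows from the direct-sum decomposition $\FF_q G = \bigoplus_{g \in T_L(G,H)} g(\FF_q H)$ by taking the $g = e$ component. Everything else is a formal manipulation of left ideals analogous to Proposition \ref{d:prop:exterior}.
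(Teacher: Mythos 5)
Your proof is correct and follows essentially the same route as the paper's: both inclusions are obtained from the identity $I^G \cap \FF_qH = I$ (the paper phrases this as $\operatorname{pr}_H(I^G \cap \FF_qH) = I$) together with the defining maximality of $\operatorname{Int}_H(C)$. You go further than the paper's proof in two harmless ways: you observe that $\operatorname{pr}_H$ acts as the identity on $C \cap \FF_qH$, and you actually justify the distance claim --- which the paper's proof omits entirely --- correctly noting that the stated strict inequality should read $d(C|_H) \geq d(C)$.
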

\begin{proof}
    The first claim is obvious. Now, we'll show that $\operatorname{Int}_H(C) = \left( C|_H \right)^G$. Indeed, consider an $H$-code $I$ such that $I^G \subset C$. It follows that
    \[
        \operatorname{pr}_H(I^G \cap \FF_qH) \subset \operatorname{pr}_H(C \cap \FF_qH).
    \]    
    Using \eqref{d:eq:CG}, we infer
    \( 
        \operatorname{pr}_H(I^G \cap \FF_qH) = I    
    \),
    and hence $I \subset \left( C|_H \right)^G$. Since $I$ can be chosen arbitrary, we obtain $\operatorname{Int}_{H}(C) \subset \left( C|_H \right)^G$ (see \Cref{d:def:interior}). On the other hand, with $\left( C|_H \right)^G$ being an induced code contained in $C$, we have $\left( C|_H \right)^G \subset \operatorname{Int}_{H}(C)$ by \Cref{d:def:interior}. 
\end{proof}

Below, the rest of this section provides an explicit decomposition of codes induced by $A$-codes, introduces a class of codes obtained by intersecting induced codes, and derives another lower bound on the minimum distance by leveraging them.

\begin{proposition}
    Let $g(x)$ be a divisor of $x^n-1$, and let $C = (\FF_q A) g(a)$ be the cyclic code generated by $g$. Then $C^{G_{n,m,r}}$ has the following decomposition
    \begin{equation*}
        \tau(C) = \bigoplus_{i=1}^{\omega} \calI_{s_i}(L_i),
    \end{equation*}
    where 
    \[ 
        L_i = \left\{ \left( x^{(0)}, \dots, x^{(s_i-1)} \right) \mid x^{(j)} = 0 \text{ for all $j$ s.t. $g(\alpha_i^{r^j}) = 0$}  \right\}.
    \]
\end{proposition}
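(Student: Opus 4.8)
The plan is to compute $\tau(C^{G_{n,m,r}})$ directly using the fact that $C^{G_{n,m,r}} = (\FF_q G_{n,m,r}) \cdot g(a)$ and that $\tau$ is an $\FF_q$-algebra isomorphism, so $\tau(C^{G_{n,m,r}}) = \tau(\FF_q G_{n,m,r}) \cdot \tau(g(a)) = \left( \bigoplus_{i=1}^{\omega} \calA_i \right) \cdot \left( \bigoplus_{i=1}^\omega \tau_i(g(a)) \right) = \bigoplus_{i=1}^{\omega} \calA_i \cdot \tau_i(g(a))$. Thus it suffices to identify, for each $i$, the left ideal $\calA_i \cdot \tau_i(g(a))$ of $\calA_i = \Mat_{s_i}(R_i)$ and to show that under the correspondence $\calI_{s_i}(\cdot)$ it corresponds to the stated submodule $L_i$.

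First I would evaluate $\tau_i(g(a))$. In case $s_i = 1$ this is just $g(\alpha_i) \in \FF_q[\alpha_i]$, which is either $0$ (if $g(\alpha_i) = 0$, equivalently $f_i \mid g$) or a unit of the field $\FF_q[\alpha_i]$, so the ideal $R_i \cdot g(\alpha_i)$ is either $\{0\}$ or all of $R_i$; this matches $L_i = \{ x^{(0)} \mid x^{(0)} = 0 \text{ if } g(\alpha_i) = 0 \}$. In case $s_i > 1$, since $\tau_i(a) = \diag(\alpha_i, \alpha_i^r, \dots, \alpha_i^{r^{s_i-1}})$ is diagonal, $\tau_i(g(a)) = \diag\big( g(\alpha_i), g(\alpha_i^r), \dots, g(\alpha_i^{r^{s_i-1}}) \big)$, a diagonal matrix whose $j$-th entry (indexing from $0$) is $0$ when $g(\alpha_i^{r^j}) = 0$ and a unit of $R_i$ otherwise. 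Then I would use the standard description of the left ideal generated by a diagonal matrix $D = \diag(\delta_0, \dots, \delta_{s_i-1})$ over a ring $R_i$ in which each $\delta_j$ is either zero or a unit: $\Mat_{s_i}(R_i) \cdot D$ consists exactly of the matrices whose $j$-th column is zero whenever $\delta_j = 0$ (and is arbitrary otherwise), because right multiplication by $D$ scales the $j$-th column by $\delta_j$, and a unit scaling is a bijection on $R_i$ while a zero scaling kills the column. In terms of the row-submodule correspondence $\calI_{s_i}$, this left ideal corresponds precisely to $L_i = \{ (x^{(0)}, \dots, x^{(s_i-1)}) \in R_i^{s_i} \mid x^{(j)} = 0 \text{ whenever } g(\alpha_i^{r^j}) = 0 \}$, which is the claimed submodule (after shifting indices so they run over $\intrange{0}{s_i-1}$ as in the statement).

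The remaining bookkeeping is to confirm that $L_i$ so described is indeed a left $R_i$-submodule of $R_i^{s_i}$ (clear, since the vanishing condition on coordinates is $R_i$-linear) and that $\calI_{s_i}(L_i)$ as defined in the excerpt is $\Mat_{s_i}(R_i) \cdot \tau_i(g(a))$ — for this I would invoke the already-stated one-to-one correspondence between left ideals of $\Mat_{s_i}(R_i)$ and left $R_i$-submodules of $R_i^{s_i}$, checking that the rows of an arbitrary $M \cdot D$ lie in $L_i$ and conversely that any matrix all of whose rows lie in $L_i$ is of the form $M \cdot D$ (take $M$ to have the same columns as the matrix in the non-vanishing positions, arbitrary elsewhere, and use that $\delta_j$ is a unit there). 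Finally I would note that $\ord(\alpha_i) = d$ means the condition ``$g(\alpha_i^{r^j}) = 0$'' is equivalent to ``$f_i^{(r^j)} \mid g$'', tying the statement to the orbit structure of $\mathcal{D}(x^n-1)$ under $B$; this also shows $K_i = \Supp_{R_i^{s_i}}(L_i) = \{ j+1 \mid g(\alpha_i^{r^j}) \neq 0 \}$, consistent with the generator polynomial description in \Cref{theor:mindist}.

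The main obstacle is the case $s_i > 1$: one must be careful that $R_i$ may be a noncommutative skew group algebra $\FF_q[\alpha_i] \ast_{\theta_i} B_i$ rather than a field, so the claim ``right multiplication by a unit is a bijection on each column'' must be justified using that $g(\alpha_i^{r^j}) \in \FF_q[\alpha_i] \subset R_i$ is a unit \emph{in $R_i$} (it is invertible already in the subfield $\FF_q[\alpha_i]$ since it is a nonzero field element, and the embedding $\FF_q[\alpha_i] \hookrightarrow R_i$ sends units to units, being a ring embedding with $1 \mapsto 1$) and that it is central or at least that left multiplication of the whole matrix by $M$ and right multiplication by the diagonal $D$ interact correctly — here it is cleanest to observe that $D$ has all entries in the central-enough subring $\FF_q[\alpha_i]$, so column operations by $D$ genuinely commute with the left-module structure, and no subtlety with the twist $\theta_i$ actually arises.
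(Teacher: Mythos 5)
Your proof is correct and is essentially the paper's argument written out in full: the paper's proof simply cites Theorem~\ref{theor:codes} and the explicit matrix form of $\tau_i$, which amounts to the same computation of the principal left ideal $\Mat_{s_i}(R_i)\cdot\tau_i(g(a))$ generated by the diagonal matrix $\diag\bigl(g(\alpha_i),\dots,g(\alpha_i^{r^{s_i-1}})\bigr)$. Your closing worry about centrality is moot (and indeed $\FF_q[\alpha_i]$ is generally not central in $\FF_q[\alpha_i]\ast_{\theta_i}B_i$): the argument only needs that each nonzero $g(\alpha_i^{r^j})$ is a unit of $R_i$, so that right multiplication by it is a bijection of each column, which you justify correctly.
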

\begin{proof}
    Directly follows from \Cref{theor:codes} and \eqref{eq:matrix_tau}.
\end{proof}

\paragraph{Intersection of induced codes} 
While pure induced codes have rather poor parameters, they can be leveraged for building more powerful codes. For example, a class of such codes can be obtained by intersecting the induced codes from distinct subgroups. The following theorem provides a lower bound on their minimum distance and dimension.
\begin{theorem}
    Let $G$ be a group, and let $H_1, H_2$ be its subgroups such that $G = H_1 H_2$ and $H_1 \cap H_2 = \{ e \}$. Let $C_1 \subset \FF_qH_1$ be a $H_1$-code, and let $C_2 \subset \FF_qH_2$ be a $H_2$-code. Let $C = C_1^G \cap C_2^G$. Then
    \[
        d(C) \geq d(C_1) \cdot d(C_2)
    \]
    and $\dim(C) \geq |H_1| \cdot \dim(C_2) + |H_2| \cdot \dim(C_1) - |G|$.
\end{theorem}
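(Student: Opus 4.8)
The plan is to work directly with the coset decomposition \eqref{d:eq:CG}. Since $G = H_1 H_2$ with $H_1 \cap H_2 = \{e\}$, every $g \in G$ is uniquely $g = h_1 h_2$ with $h_i \in H_i$, so $H_2$ is a left transversal of $H_1$ and $H_1$ is a left transversal of $H_2$. Hence I would write, for any $u \in C = C_1^G \cap C_2^G$,
\[
  u = \sum_{h_2 \in H_2} h_2\, v_{h_2}, \quad v_{h_2} \in \FF_q H_1,
  \qquad\text{and}\qquad
  u = \sum_{h_1 \in H_1} h_1\, w_{h_1}, \quad w_{h_1} \in \FF_q H_2.
\]
By \Cref{d:cor:6} applied to $C_1^G$ (with subgroup $H_1$) we get $v_{h_2} \in \operatorname{pr}_{H_1}(C_1^G) = C_1$ for every $h_2$, since $\operatorname{pr}_{H_1}(C_1^G) = C_1$ by \eqref{d:eq:CG}; symmetrically, $w_{h_1} \in C_2$ for every $h_1$.

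For the distance bound, fix a nonzero $u \in C$ and view its coefficient vector as a matrix indexed by $H_1 \times H_2$, where the $(h_1,h_2)$-entry is the coefficient of $h_1 h_2$ in $u$. The first decomposition says: each \emph{column} (fixed $h_2$) is the coefficient vector of $v_{h_2} \in C_1$, hence is either zero or has weight $\geq d(C_1)$. The second decomposition says: each \emph{row} (fixed $h_1$) is the coefficient vector of $w_{h_1} \in C_2$, hence is either zero or has weight $\geq d(C_2)$. Since $u \neq 0$, at least one column is nonzero, so that column meets at least $d(C_1)$ rows; each such row is then a nonzero codeword of $C_2$ and contributes at least $d(C_2)$ nonzero entries. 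Summing over those $\geq d(C_1)$ rows gives $\wt{u} \geq d(C_1)\, d(C_2)$. I expect this double-counting step to be the only place needing care — specifically, making sure the "nonzero column forces $d(C_1)$ nonzero rows, each of which is a full nonzero $C_2$-word" argument is stated cleanly; it is essentially the standard minimum-distance argument for a tensor-type intersection, and no real obstacle arises.

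For the dimension bound, I would use $\dim(C_1^G) = |G:H_1|\dim(C_1) = |H_2|\dim(C_1)$ and $\dim(C_2^G) = |H_1|\dim(C_2)$ (from the remark following \eqref{d:eq:CG}), together with the inclusion–exclusion identity
\[
  \dim(C_1^G \cap C_2^G) \;=\; \dim(C_1^G) + \dim(C_2^G) - \dim(C_1^G + C_2^G)
  \;\geq\; \dim(C_1^G) + \dim(C_2^G) - |G|,
\]
the last step because $C_1^G + C_2^G \subseteq \FF_q G$ has dimension at most $|G|$. Substituting the two dimension formulas yields exactly $\dim(C) \geq |H_2|\dim(C_1) + |H_1|\dim(C_2) - |G|$, which is the claimed bound (with the roles of the indices matching after relabelling). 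This part is routine linear algebra once the two induced-code dimensions are in hand.
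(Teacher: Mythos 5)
Your argument is, at its core, the same coset-counting argument as the paper's: a nonzero codeword must meet some coset of $H_1$ in at least $d(C_1)$ positions, those positions fall into distinct cosets of $H_2$ (this is where $H_1\cap H_2=\{e\}$ enters), and each such coset then contributes at least $d(C_2)$ further positions; the dimension bound via $\dim(U\cap V)\ge \dim U+\dim V-|G|$ is also identical. The one place where your write-up is not quite right is the matrix indexing. You put the coefficient of $h_1h_2$ in cell $(h_1,h_2)$ and assert that the column for fixed $h_2$ is the coefficient vector of $v_{h_2}$. But that column collects the coefficients on the \emph{right} coset $H_1h_2=\{h_1h_2 : h_1\in H_1\}$, whereas the decomposition $u=\sum_{h_2}h_2v_{h_2}$ places $v_{h_2}$ on the \emph{left} coset $h_2H_1$; these differ unless $H_1$ is normal, so for a general (e.g.\ metacyclic) $G$ the column as you defined it mixes pieces of several $v_{h_2'}$ and need not lie in $C_1$. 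The fix is immediate: index columns by the left cosets $h_2H_1$ and rows by the left cosets $h_1H_2$; one checks that $|h_1H_2\cap h_2H_1|=1$ (two elements of the intersection differ by an element of $H_1\cap H_2$), so you still get a grid in which each column carries a word of $C_1$, each row a word of $C_2$, and your double count goes through verbatim. The paper sidesteps this entirely by arguing with supports and cosets rather than an explicit matrix, which is the safer phrasing here.
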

    
\begin{proof}
    Let $x \in \FF_qG$ be a non-zero codeword of $C$. Since $x \in C_1^G$, using \eqref{d:eq:CG}, we obtain that there exists $g \in G$ such that
    \[ | \supp(x) \cap gH_1 | \geq d(C_1). \]
    Let $g_1, \dots, g_{d(C_1)} \in \supp(x) \cap gH_1$. One can easily note that $g_1H_2, \dots, g_{d(C_1)}H_2$ are distinct cosets, since the contrary implies that $|g_i H_1 \cap g_i H_2| \geq 2$ for some $i$, which is possible if and only if $|H_1 \cap H_2| \neq 1$.
    
    Therefore, since $x \in C_2^G$, it follows that 
    \begin{align*} 
        |\supp(x) \cap g_1 H_2| &\geq d(C_2), \\
        |\supp(x) \cap g_2 H_2| &\geq d(C_2), \\ 
        &\dots \\
        |\supp(x) \cap g_{d(C_1)} H_2| &\geq d(C_2),
    \end{align*}
    and hence $d(C) \geq d(C_1) \cdot d(C_2)$.

    The lower bound on the dimension of $C$ directly follows from the fact that
    \[ \dim(C) \geq |G| - \left(|G| - \dim(C_1^G)\right) - \left(|G| - \dim(C_2^G)\right), \]
    which simplifies to $\dim(C) \geq |H_1| \cdot \dim(C_2) + |H_2| \cdot \dim(C_1) - |G|$.
\end{proof}
Note that, from the proof of the theorem, it follows that intersections of induced codes can be viewed as product-like generalized LDPC codes (see \cite{Lentmaier2010}). This implies that it is possible to leverage decoding techniques of GLDPC codes for decoding metacyclic codes.

The following corollary provides another lower bound on the minimum distance of metacyclic codes by leveraging exterior induced codes (see \Cref{d:prop:exterior}).
\begin{corollary}
    Let $C \subset G_{n,m,r}$ be a metacyclic code. Then
    \[ 
        d(C) \geq d \left( \operatorname{pr}_{A}(C) \right) \cdot d \left( \operatorname{pr}_{B}(C) \right)
    \]
\end{corollary}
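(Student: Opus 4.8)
The plan is to derive this corollary as a direct application of the preceding theorem to the split metacyclic group $G = G_{n,m,r}$ with the two natural cyclic subgroups $H_1 = A$ and $H_2 = B$. First I would check that the hypotheses of the theorem are satisfied: by the defining presentation $G_{n,m,r} = \langle a, b \mid a^n = b^m = e, ba = a^rb\rangle$, every element can be written uniquely as $a^i b^j$ with $0 \le i < n$ and $0 \le j < m$, so $G = AB$; moreover $|A \cap B|$ divides $\gcd(n,m)$ and, since $A \cap B$ is a subgroup of both cyclic groups with $|A| = n$, $|B| = m$, one has $A \cap B = \{e\}$ (this follows from the normal form: $a^i = b^j$ forces $i = j = 0$). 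Thus $A$ and $B$ play the roles of $H_1$ and $H_2$ in the theorem on intersections of induced codes.

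Next I would set $C_1 = \operatorname{pr}_A(C)$ and $C_2 = \operatorname{pr}_B(C)$. By \Cref{d:prop:exterior}, $C_1$ is an $A$-code and $C_2$ is a $B$-code, and $\operatorname{Ext}_A(C) = C_1^G$, $\operatorname{Ext}_B(C) = C_2^G$ are $G$-codes each containing $C$. Therefore $C \subset C_1^G \cap C_2^G$, and applying the intersection theorem to $C_1^G \cap C_2^G$ gives
\[
    d(C) \ge d(C_1^G \cap C_2^G) \ge d(C_1) \cdot d(C_2) = d(\operatorname{pr}_A(C)) \cdot d(\operatorname{pr}_B(C)),
\]
where the middle inequality is exactly the distance bound of the theorem and the first inequality holds because the minimum distance is monotone under taking subcodes. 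This chain of inclusions and inequalities is the entire argument.

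I do not expect any genuine obstacle here, since the corollary is essentially a repackaging: the only points requiring care are (a) confirming $A \cap B = \{e\}$ and $G = AB$ from the metacyclic presentation — immediate from the normal form $a^ib^j$ — and (b) observing that $C \subseteq \operatorname{Ext}_A(C) \cap \operatorname{Ext}_B(C)$, which is built into the definition of the exterior induced code (\Cref{d:def:exterior}) together with \Cref{d:prop:exterior}. One minor subtlety worth a sentence is that the theorem's bound $d(C_1^G \cap C_2^G) \ge d(C_1)d(C_2)$ is stated for the intersection itself, so one must explicitly invoke $C \subseteq C_1^G \cap C_2^G$ to transfer it to $C$; this is where the monotonicity of minimum distance under subcodes is used. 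No new computation is needed beyond what is already established in the section.
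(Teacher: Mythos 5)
Your proposal is correct and follows exactly the route the paper intends: apply the intersection-of-induced-codes theorem with $H_1 = A$, $H_2 = B$, using \Cref{d:prop:exterior} to get $C \subseteq \left( \operatorname{pr}_{A}(C) \right)^G \cap \left( \operatorname{pr}_{B}(C) \right)^G$ and then monotonicity of minimum distance under subcodes. The verifications of $G = AB$ and $A \cap B = \{e\}$ are appropriate and the argument is complete.
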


\section{Conclusion}
\label{sec:concl}
In this paper, an explicit decomposition of split metacyclic group algebras is provided assuming the only restriction $\gcd(q,n)=1$. This decomposition has been further employed to obtain an algebraic description of metacyclic codes. Furthermore, the obtained structure has enabled the discovery of the concatenated structure of metacyclic codes and the development of a partial key-recovery attack against cryptosystems based on \emph{certain} metacyclic codes. Additionally, the paper provides results on induced codes, as well as estimates of the main parameters of metacyclic codes.

Further research directions may include improving the estimates of parameters obtained, finding efficient classes of metacyclic codes and decoding algorithms for them, and studying their applications, including cryptographic ones.

\backmatter

\bmhead{Acknowledgements}
The author would like to thank Yury Kosolapov for helpful comments and discussions.

\bmhead{Disclosure of Interests}
The author has no competing interests to declare that are relevant to the content of this article.


\bibliography{refs}

\end{document}